\newcommand{\F}{\mathbb{F}}
\newcommand{\E}{\mathbb{E}}
\newcommand{\COMMENTZ}[1]{}
\newcommand{\Str}{\emph{Str}}
\newcommand{\Open}{\emph{Open}}
\newcommand{\Comm}{\emph{Comm}}
\newcommand{\CHSH}{\mathrm{CHSH}}
\newcommand{\Av}{\mathcal{A}_{c(v)}}
\newcommand{\Bv}{\mathcal{B}_{c(v)}}
\newcommand{\Alice}{\mathcal{A}}
\newcommand{\Bob}{\mathcal{B}}
\newcommand{\lv}{\mathrm{leaf}(v)}
\newtheorem*{rep@theorem}{\rep@title}
\newcommand{\newreptheorem}[2]{%
\newenvironment{rep#1}[1]{%
 \def\rep@title{#2 \ref{##1}}%
 \begin{rep@theorem}}%
 {\end{rep@theorem}}}
\newtheorem{theo}{Theorem} 
\newtheorem{lemma}[theo]{Lemma}
\newtheorem{proposition}{Proposition}
\newtheorem{corollary}{	Corollary}
\newtheorem{conj}[theo]{Conjecture}
\newcommand{\be}{\begin{equation}}
\newcommand{\ee}{\end{equation}}
\newcommand{\etal}{\textit{et al.} }
\newcommand{\eps}{\varepsilon}
\newtheorem{definition}{Definition}
\newtheorem{observation}{Observation}
\def\multiset#1#2{\ensuremath{\left(\kern-.3em\left(\genfrac{}{}{0pt}{}{#1}{#2}\right)\kern-.3em\right)}}
\def\BState{\State\hskip-\ALG@thistlm}
\newcommand{\FQ}{\mathbf{\mathsf{\mathbbm{F}_Q}}}
\newcommand{\zo}{\{0,1\}}
\title{Robust Relativistic Bit Commitment}
\date{}
\author{Kaushik Chakraborty \quad Andr\'e Chailloux \quad Anthony Leverrier \\ Inria Paris, France}
\begin{document}

\maketitle

\begin{abstract}
Relativistic cryptography exploits the fact that no information can travel faster than the speed of light in order to obtain security guarantees that cannot be achieved from the laws of quantum mechanics alone. Recently, Lunghi \textit{et al} [\textit{Phys.~Rev.~Lett.}~2015] presented a bit commitment scheme where each party uses two agents that exchange classical information in a synchronized fashion, and that is both hiding and binding. A caveat is that the commitment time is intrinsically limited by the spatial configuration of the players, and increasing this time requires the agents to exchange messages during the whole duration of the protocol. While such a solution remains computationally attractive, its practicality is severely limited in realistic settings since all communication must remain perfectly synchronized at all times. 

In this work, we introduce a robust protocol for relativistic bit commitment that tolerates failures of the classical communication network. This is done by adding a third agent to both parties. Our scheme provides a quadratic improvement in terms of expected sustain time compared to the original protocol, while retaining the same level of security.
\end{abstract}

\section{Introduction}

Bit commitment is a cryptographic primitive between two players Alice (the committer), and Bob (the receiver) who do not trust each other. A bit commitment 
protocol has two main phases: a \emph{commit phase} and an \emph{open} (or \emph{reveal}) \emph{phase}. Alice commits to a bit $d$ during the commit phase. We say that the protocol is \emph{hiding} if before the open phase, Bob has no information about $d$. During the open phase, Alice reveals $d$ to Bob, who wants to make sure that Alice didn't change her mind about the value of $d$, this is the \emph{binding} property.

It is well-known that bit commitment is impossible in the standard model \cite{BGK88}, even when allowing for quantum protocols \cite{May97,LC97}. In that case, it was shown that a protocol 
cannot be both hiding and binding. On the other hand, bit commitment becomes possible in the splitting agent model, where the two players Alice and Bob have a coalition of agents at their disposal: $\mathcal{A}_1, \ldots, \mathcal{A}_m$ for Alice, $\mathcal{B}_1, \ldots, \mathcal{B}_m$ for Bob.
The basic idea is to dispatch these agents in $m$ distant locations and restrict the information exchange between different locations.
This model has been extensively considered in the classical domain since the no communication assumption allows to implement many interesting cryptographic primitives: bit commitment \cite{BGK88}, oblivious transfer \cite{NP00} or protocols for private information retrieval \cite{GIKM98,KW04,GASA04}.

From a practical point of view, however, the no communication assumption is a bit difficult to justify. A convincing way to enforce it is to rely on the \emph{No Superluminal Signaling} (NSS) principle which states that no carrier of information can travel faster than the speed of light. In particular, an event in spacetime cannot be influenced by events which do not lie in its past causal cone.

The idea of using the NSS principle for cryptographic protocols originated in a pioneering work by Kent in 1999 \cite{Kent99} as a way to physically enforce the non communication constraint between the different agents of one party. The original goal of Kent was to bypass the no-go theorems for quantum bit-commitment \cite{May97,LC97}. Interestingly, this original protocol was classical and allowed for several rounds which increased the lifespan of the protocol. However, the protocol required to exchange messages whose length scaled exponentially in the number of rounds (\textit{i.e.}~the commitment time) and a feasible implementation was not possible for a large number of rounds.  A subsequent work \cite{Kent05} improved this scaling, but to our knowledge, no precise time/security tradeoff is available for this protocol.

More recently, quantum relativistic bit commitment protocols were developed where the parties exchange quantum systems, with the hope that combining the NSS principle with quantum theory will lead to more secure (but less practical) protocols \cite{Kent11,Kent12, KTH13}. In particular, the protocol \cite{Kent12} was implemented in Ref.~\cite{LKB13}.  We note that the scope of relativistic cryptography is not limited to bit commitment. For instance, there was recently some interest (sparked again by Kent) for position-verification protocols \cite{KMS10, LL11,Unr14} but contrary to the case of bit commitment, it was shown that secure position-verification is impossible both in the classical and the quantum settings \cite{CGM09,BCF14}.

The original idea of \cite{BGK88} was recently revisited by Cr\'epeau \etal \cite{CSST11} (see also \cite{sim07}). Based on this work, Lunghi \textit{et al.}~devised a multi-round bit commitment protocol involving only four agents, two for Alice and two for Bob \cite{LKB+15}. They managed to prove that this protocol, which we call the ``$\FQ$ protocol'' from now on, remains secure for several rounds, against classical attacks. Unfortunately, this proof was rather inefficient since the complexity of the protocol (the size of the messages the agents need to exchange at each round) scaled exponentially with the number of rounds. 
Recently, two papers improved the security proof and showed that the complexity of the protocol in fact only scales logarithmically with the number of rounds \cite{CCL15,FF15}, implying that the commitment time is essentially unlimited. This much better scaling shows that the protocol is quite practical, and a convincing experiment recently demonstrated the possibility of sustaining a commitment for 24 hours \cite{VMH+16}, consisting of $5 \times 10^9$ rounds. 
Although quite impressive, it should be noted that this implementation crucially used a 1 meter dedicated optical link between $\mathcal{A}_1$ and $\mathcal{B}_1$ (as well as between $\mathcal{A}_2$ and $\mathcal{B}_2$). 
In order to implement the protocol in a more realistic fashion, Alice and Bob's agents would need to communicate over a real telecom network, which is prone to rare failures, for instance delays in packet deliveries that would invalidate the no communication assumption and would cause the protocol to abort.

An important drawback of the $\FQ$ protocol is that it is not at all robust against losses, or delays. Indeed, for the bit commitment to succeed, it is crucial that the various agents communicate with perfect synchronization for all $k$ rounds of the protocol: if one agent fails to answer one challenge in time, then the whole protocol aborts. 
While this could be fine for small values of $k$, say $k \leq 10$, this is obviously disastrous for much larger values, for instance $k$ ranging in the millions or billions as in \cite{VMH+16}. For this reason, it is important to see whether some variant of the $\FQ$ protocol can be made tolerant against (a limited) amount of losses.
In this paper, we investigate one such variant where the original $\FQ$ protocol is modified so that both parties have now three agents at their disposal instead of two. 
We present the protocol in Section \ref{sec:protocols}. We prove its security against classical adversaries in Section \ref{sec:security} where we show that the security scales similarly as for the $\FQ$ protocol. 
Finally, in Section \ref{sec:analysis}, we show that the communication cost of the protocol is comparable to that of the $\FQ$ protocol but that its expected commitment time is quadratically improved.

\section{Description of the commitment schemes}
\label{sec:protocols}

A commitment scheme $\Pi = (COMM,OPEN)$ is the description of the protocol followed by the honest parties during both the commit and the open phases. All the protocols that we consider in this paper will be perfectly hiding and we will consequently only be interested in the binding property. Therefore, we only consider the case of a cheating Alice, 
which will be described through her cheating strategy  $\Str^* = (\Comm^*,\Open^*)$ in both phases of the protocol. 
The binding property we consider is the standard sum-property, that was also used 
in previous work regarding relativistic bit commitment \cite{LKB+15,FF15,CCL15}. 

\begin{definition}[Sum-binding]
\label{def1}
	We say that a bit commitment protocol $\Pi$ is \emph{$\eps$-sum-binding} if
	$$ \forall \ \Comm^*, \ \sum_{d = 0}^{1} \max_{\Open^*} \left( \Pr[\mbox{Alice successfully reveals } d \mid (\Comm^*,\Open^*)]  \right) \le 1 + \eps.$$
\end{definition}  

In this section, we describe successively the single-round protocol (with commitment time bounded by $\tau = D/c$ where $D$ is the distance between the distant locations and $c$ is the speed of light), the $\FQ$ multi-round protocol and finally our loss-tolerant protocol, the \emph{Tree protocol}.

For simplicity of analysis, we consider in this paper that all computations are performed instantaneously and that information travels at the speed of light. One could relax these assumptions by replacing $\tau$ by a smaller constant, but this would not change the various scalings of parameters and we therefore ignore this issue here. 

An important consequence of the fact that the protocols are perfectly hiding is that the spatial configuration of the agents needs only to be checked by Bob: in particular, it is sufficient for Bob to make sure that his agents are at a distance at least $D$ from each other. If this is the case, and if Alice's agents answer their challenges in time, then Bob can deduce that her agents are also separated by a distance $D$.

\subsection{The single-round protocol}

The single-round version of the protocol was introduced by Cr{\'e}peau \etal \cite{CSST11} (see also \cite{sim07}).
Both players, Alice and Bob, have agents $\mathcal{A}_1, \mathcal{A}_2$ and $\mathcal{B}_1, \mathcal{B}_2$ present at two spatial locations, $L_1$ and $L_2$, separated by a distance $D$. 
We consider the case where Alice makes the commitment. 
The protocol (followed by honest players) consists of four phases: preparation, commit, sustain and reveal. The sustain phase in the single-round protocol is trivial and simply 
consists in waiting for a time less than $\tau$, which is the time needed for light to travel between the two locations.

Overall the bit commitment protocol goes as follows. 
\begin{enumerate}
	\item \emph{Preparation phase}: $\mathcal{A}_1,\mathcal{A}_2$ (resp.~$\mathcal{B}_1,\mathcal{B}_2$) share a random number $a\in \FQ$ (resp.~$b \in \FQ$).
	\item \emph{Commit phase}: $\mathcal{B}_1$ sends $b$ to $\mathcal{A}_1$, who returns $y = a + d * b$ where $d \in \F_2$ is the committed bit. Here and everywhere in this paper, all operations are understood in $\F_Q$. 
	\item \emph{Sustain phase}: $\mathcal{A}_1$ and $\mathcal{A}_2$ wait for some time less than $\tau$.
	\item \emph{Reveal phase}: $\mathcal{A}_2$ reveals the values of $d$ and $a$ to $\mathcal{B}_2$ who checks that $y = a + d*b$.
\end{enumerate}

\subsection{The $\FQ$-protocol (multi-round, not loss-tolerant)}

The single-round protocol above was recently extended to a multi-round commitment scheme \cite{LKB+15}. The main idea to increase the commitment time is to delay the reveal phase and 
have $\mathcal{A}_2$ commit to the \emph{string} $a$ instead of revealing it. In fact, the new sustain phase will now consist of many rounds where the active agents (\textit{i.e.}~the 
agent of Alice who commits in that given round and the corresponding agent for Bob) alternate between locations $L_1$ and $L_2$.
Overall the $k$-round bit commitment protocol goes as follows (for $k$ even):
\begin{enumerate}
	\item \emph{Preparation phase}: $\mathcal{A}_1,\mathcal{A}_2$ (resp.~$\mathcal{B}_1,\mathcal{B}_2$) share $k$ random numbers $a_1,\dots,a_{k}$ (resp.~$b_1,\dots,b_{k}$) $\in \FQ$. 
	\item \emph{Commit phase} (round 1): $\mathcal{B}_1$ sends $b_1$ to $\mathcal{A}_1$, who returns $y_1 = a_1 + d * b_1$ where $d \in \F_2$ is the committed bit. 
	\item \emph{Sustain phase}: at round $j \leq k$, active Bob sends $b_j \in \FQ$ to active Alice, who returns $y_j= a_j+ b_j* a_{j-1} $.
	\item \emph{Reveal phase}: $\mathcal{A}_1$ reveals $d$ and $a_k$ to $\mathcal{B}_1$. $\mathcal{B}_1$ computes recursively $\alpha_0 = d$ and $\alpha_{i+1} = y_{i+1} - b_{i+1}*\alpha_i$ and checks that $\alpha_k = a_k$. If this is the case, Alice has successfully revealed the bit $d$.
\end{enumerate}

The main idea of the multi-round protocol is to delay the reveal phase in order to increase the commitment time. This delay is obtained by making the passive Alice commit 
to the value of the string she was supposed to reveal in the previous round. Since each round increases the total commitment time by a quantity equal to $\tau$ (modulo the time needed for the various algebraic manipulations in $\FQ$ that we ignore), one sees that the required number of rounds scales linearly with the commitment time one wishes to achieve.

We require that round $j$ finishes before any information about $b_{j-1}$ reaches the other Alice. For any $j$, this implies that Alice's active agent has no information about $b_{j-1}$. In particular, this means that $y_j$ is independent of $b_{j-1}$. This will be crucial in order to show security of the protocol.

\subsection{The Tree protocol (multi-round and loss-tolerant)}

In order to formulate a loss-tolerant variant of the $\FQ$-protocol, we require that each party has 3 agents located at three locations $L_1, L_2, L_3$ which are at least at a distance $D$ from each other. As in the $\FQ$ multi-round protocol, timing constraints are represented by rounds. In the original protocol, at each round, a pair of agents $(\Alice_i,\Bob_i)$ performs a communication round, consisting of a challenge $b_i$ from Bob's agent to Alice's agent and an answer $y_i$ from Alice's agent to Bob's. 

Our $k$-round Tree protocol is represented by the complete binary tree of depth $k$ with $2^{k+1}-1$ nodes (recalling that the tree with a single node has depth 0 by convention). The depth of a node $v$ is equal to the length $|v|$ of the string $v$. A node of the tree is a string $v$ of $j \leq k$ letters in the alphabet $\{\ell, r\}$, corresponding to left or right child. 
Let us denote by $V$ the set of all nodes of the tree, so that $|V| = 2^{k+1}-1$ and by $V^*$ the set of all internal nodes of the tree, that is nodes that are not leaves. Let us further denote $n_k = |V^*| = 2^{k}-1$ the cardinality of $V^*$.
The root of the tree is the empty string $\varnothing$. A given node $v$ of depth $j < k$ has two children, a left child $v\ell$ and a right child $vr$. A node $v$ of depth $j \geq 1$ has a unique parent $v(\mathrm{parent})$ and a unique brother $v(\mathrm{brother})$: indeed, if $v$ is of the form $wt$ with $t \in \{\ell, r\}$, then $v(\mathrm{parent})=w$ and $v(\mathrm{brother}) = w \bar{t}$ where $\bar{t}$ is the element of $\{\ell, r\}$ distinct from $t$.

To describe the Tree protocol, we need a 3-coloring $c$ of this complete binary tree of depth $k$. The coloring $c$ is a function
\begin{displaymath}
c: \left\{
\begin{tabular}{ccc}
$V$ & $\to$ & $\{1,2,3\}$\\
$v$ & $\mapsto$ & $c(v)$\\
\end{tabular}
\right.
\end{displaymath}
where $V$ is the set of all $2^{k+1}-1$ nodes in the tree, with the coloring property that for all $v$ of depth $j < k$, it holds that
\begin{align*}
\{ c(v), c(v\ell), c(vr) \} = \{1,2,3\}.
\end{align*}
The above constraints on the colors means that for any node $v$, the colors $c(v),c(v\ell)$ and $c(vr)$ are all different. In particular, two brothers have different color. This coloring will be used to assign a location $L_1, L_2$ or $L_3$ to each node of the tree. In other words, each node of the tree corresponds to a communication round taking place at the location $L_{c(v)}$ corresponding to the color $c(v)$ of the node $v$. 

  More precisely, each node $v$ of depth $j$ of the tree corresponds to a communication round with a challenge $b_v$ and an answer $y_v$ between agents $\Alice_{c(v)}$ and $\Bob_{c(v)}$ at round $j+1$. For a fixed depth, several nodes can have the same color $col$, the corresponding agents $\Alice_{col}$ and $\Bob_{col}$ will then perform all those communication rounds at this time $j+1$. The leaves of the protocol correspond to the revealing phase.

The new notion that appears in the context of loss-tolerant protocols is that of a \emph{dead or alive} node: we will say that a node $v$ fails (or is dead, or non responsive) if the corresponding agent $\mathcal{A}_{c(v)}$ fails to answer the challenge sent to her by $\mathcal{B}_{c(v)}$ within time $\tau$ at round $j = |v| - 1$. Alternatively, an agent is \emph{alive} (or responsive) if she succeeds in replying in time to the challenge. 
In order to account for this extra piece of information, we will denote by $\perp$ Alice's answer in case her agent is non responsive for a given node. Said otherwise, while Bob challenges will still be elements of $\FQ$, the answers of Alice's agents are elements of $\FQ \cup \{\perp\}$.

This failure can result from a global failure of the network for one agent $i$ for some rounds, in which case for all nodes $v$ of the corresponding depth with $c(v) = i$, we will have $b_v = \perp$. It may also happen that agent $\Alice_i$ may answer some queries in time but not some others, which will result in the corresponding nodes being alive or dead.
Of course, a cheating Alice will try to exploit such failures to increase to probability to successfully reveal the bit $d$ of her choice.

Overall the $k$-round Tree bit commitment protocol goes as follows (for $k \geq 2$):
\begin{enumerate}
	\item \emph{Preparation phase}: Agents $\mathcal{A}_i$ and $\mathcal{B}_i$ are located at $L_i$ for $i \in \{1,2,3\}$. Moreover, $\mathcal{A}_1,\mathcal{A}_2, \mathcal{A}_3$ (resp.~$\mathcal{B}_1,\mathcal{B}_2, \mathcal{B}_3$) share $n_k = 2^{k}-1$ random numbers $(a_v)_{v \in V^*} \in \mathbbm{F}_Q^{n_k}$ (resp.~$(b_v)_{v \in V^*}\in \mathbbm{F}_Q^{n_k}$). This means that the agents share random numbers for all the internal nodes of the tree (not for the leaves).
	Alice's agents also share $d \in \{0,1\}$ which is the committed bit.
	\item \emph{Commit phase} (round 1): $\mathcal{B}_{c(\varnothing)}$ sends $b_{\varnothing}$ to $\mathcal{A}_{c(\varnothing)}$, who returns $y_{\varnothing} = a_{\varnothing} + d* b_{\varnothing}$. If Bob's agent $\mathcal{B}_{c(\varnothing)}$ does not receive Alice's response before time $\tau$, then the protocol aborts.
	\item \emph{Sustain phase} (rounds 2 to $k$): at round $j+1 \leq k$, for each node $vt$ of depth $j+1$ (\textit{i.e.}~$|v|=j$ and $t \in \{\ell, r\}$), agent $\mathcal{B}_{c(vt)}$ sends $b_{vt} \in \FQ$ to $\mathcal{A}_{c(vt)}$ who returns $y_{vt} = a_{vt} + b_{vt}* a_{v}$. If $\mathcal{B}_{c(vt)}$ does not receive Alice's response within time $\tau$, the corresponding value of $y_{vt}$ is set to the value corresponding to a dead node, that is $y_{vt} = \perp$. When this is the case, the branch is considered to be dead, and Bob's agents stop sending challenges for that particular branch as soon as they know it is dead. 
	\item \emph{Reveal phase}: For each node $v = wt$ of depth $k$ (\textit{i.e.}~with $|w| = k-1$ and $t\in\{\ell, r\}$), Agent $\mathcal{A}_{c(v)}$ reveals $d$ and $a_{w}$ to $\mathcal{B}_{c(v)}$. Bob's agents check $(i)$ that for each depth $j < k$, the leftmost alive node of the tree has at least one child alive and if it's the case, then $(ii)$ that for the leftmost alive path $(v_0 = \varnothing , v_1, \ldots, v_k = v)$ in the tree, Bob's agents compute recursively the values 
$\alpha_{\varnothing} = y_{\varnothing} - b_{\varnothing}*d$, $\alpha_{v_i} =  y_{v_i} -b_{v_i} * \alpha_{v_{i-1}} $ and check that $\alpha_{v_k} =  a_{v_{k}}$. If both conditions are satisfied, then Alice has successfully revealed the bit $d$.
\end{enumerate}

\begin{figure}[h!]
\centering
 \includegraphics[scale = 0.6]{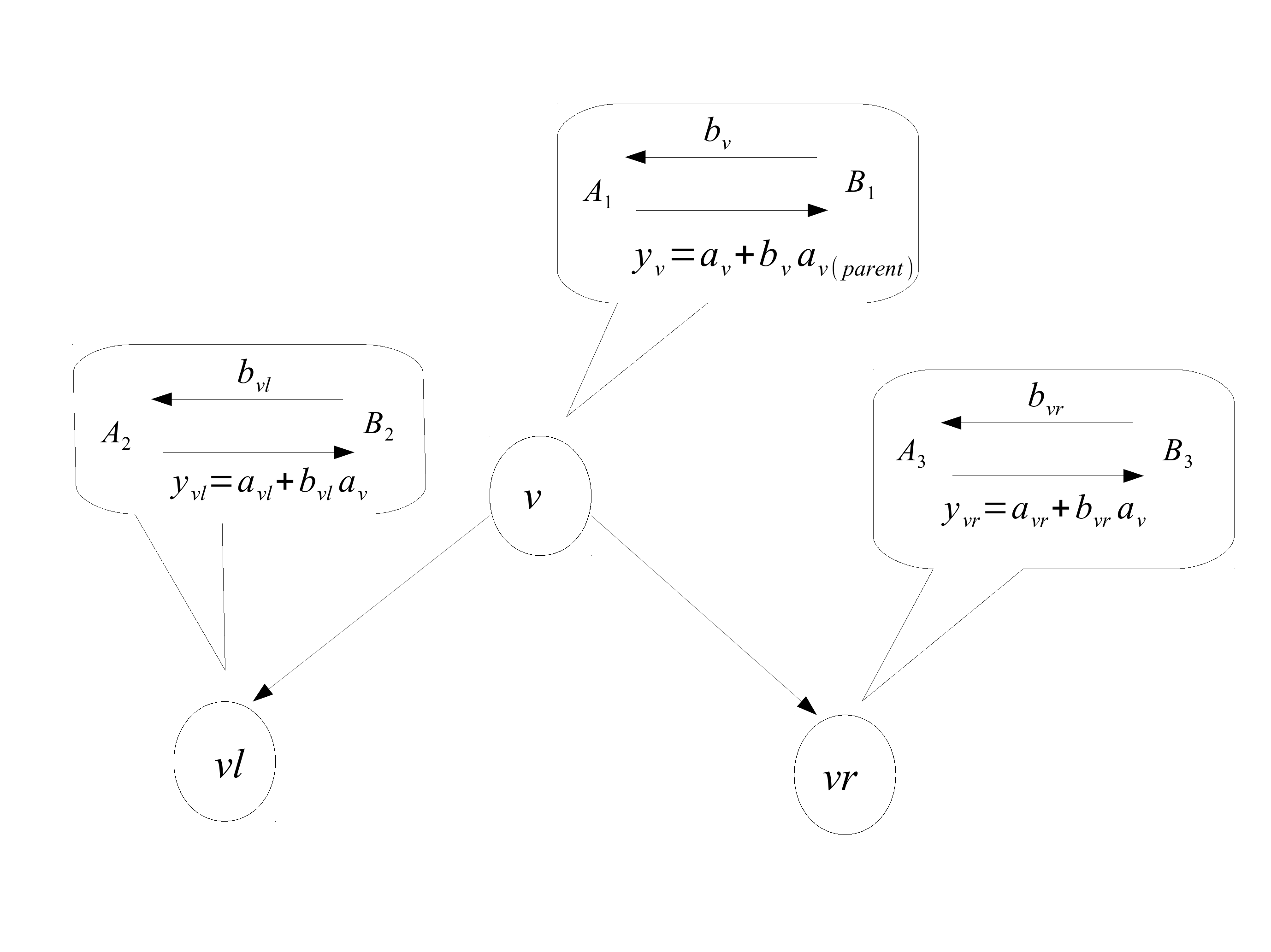}
\caption{Pictorial view for an internal node of the Tree protocol. Here the coloring is such that $c(v) = 1, c(v \ell)=2, c(v r)=3$.}
 \label{picture1}
\end{figure}

\textbf{Remark}:
Since only the values of the left-most alive branch matter for the verification step, it is useless in practice to keep other branches alive. A simple modification of the above protocol consists for Bob's agents to keep track of the leftmost alive branch and stop sending challenges for all other branches. We will analyze this in further detail in Section \ref{sec:analysis} where we investigate the communication cost of the Tree protocol.

\section{Security of the Tree protocol}
\label{sec:security}

The three protocols described above all share the property that they are perfectly hiding. Indeed, the role of the variables $a$'s shared by Alice's agents is to hide the value of $d$. If all the $a$'s are chosen uniformly at random in $\FQ$ which is the case if Alice follows honestly the protocol, then they provide a one-time pad of the secret and Bob's agents cannot obtain any information about the value of $d$ before the reveal phase. 

For this reason, our goal is to study whether these protocols are binding. In particular, this means that we will only be interested in the case where Bob is honest and follows the protocol, and Alice's agents might deviate from the protocol in order to reveal a bit that is not necessarily the one they had in mind during the commit phase. In this paper, we assume that Alice is classical, \textit{i.e.}, that her agents only share classical variables and not an entangled quantum state for instance. The question of proving security against a quantum adversary is left for future research.

Since Bob is assumed to be honest in the analysis, it means that his agents are correctly located at stations $L_1$, $L_2$ and $L_3$. In particular, there is no need for 
them to check where Alice's agents are located: it is sufficient to know that they responded in time to guarantee that for each round, each of them has to answer their 
own challenge without having access to the challenges sent to the other agents at the same round.

In all that follows, we consider without loss of generality a deterministic strategy for Alice for the $k$-round Tree protocol, in which any alive node has at least a live child.
Moreover, it is useful to understand what an optimal strategy for Alice looks like. Since only the leftmost alive branch matters in the reveal phase, at each round, Alice should make sure that the leftmost alive node has a live child, but she has some freedom to decide which one. It is easy to see that the best strategy is to always keep the right child reponsive and to decide whether the keep the left one alive or not based on the value of the challenge it receives. 
In other words, at each round, the left child of the leftmost alive child will decide either to answer its challenge (in which case, it will be the leftmost alive node at the next round), or to refuse to answer the challenge (in which case, its brother will become the leftmost alive node at the next round). 

\subsection{Sketch}
Our goal is to prove the security against a cheating Alice, on average over all of Bob's random strings $b$, which are drawn from the uniform distribution since Bob is honest. Depending on Alice's strategy and on those strings, the players will follow different leftmost paths in the tree.
The idea of the proof will be to use a recursive argument, similarly as in \cite{CCL15}. Informally, the proof will proceed as follows:

For each node $v$, we will keep track of a quantity $IP(v)$ (the Independence Parameter) that will quantify how independent $y_v$ is from $b_{v(\mathrm{parent})}$. For a fixed node $v$ of depth $j \leq k-2$, we will relate $IP(v)$ with $IP(v\ell)$ and $IP(vr)$. Then, if we define $IP_j$ to be the average independence parameter for nodes of depth $j$, we will use the previous relation to show that $IP_{j+1} \le IP_j + \frac{5}{4} \eps$ where $\eps = O(1/\sqrt{Q})$ is a security parameter.

Finally, in order to conclude, we will show that $IP_{k-1}$ corresponds exactly to Alice's cheating probability. Putting this together with the fact that $IP_0 \le \frac{1}{2} + \eps$, we will obtain the desired result.

In the above sketch, we omitted many discussions about the dependencies of the above quantities. In this section, we make the above argument formal, but defer several proofs to the Appendix. We will organize this section as follows.

In Subsection \ref{Section:Histories} below, we formally define several notions of history and of independence parameters that will be useful for our proofs. In Subsection \ref{Section:FinalCondition}, we relate the independence parameter $IP_{k-1}$ at the last round to the binding property of the protocol. Finally, in Section \ref{Section:TheProof}, we prove our recursive argument, and therefore prove the security of our protocol. The more technical details of the proof are deferred to the appendix.

\subsection{Notations \& Definitions}\label{Section:Histories}
For any $j \le k$, let $V_{\le j}$ be the set of nodes of depth at most $j$ and $V_{=j}$ the set of nodes of depth $j$. 

\begin{definition}
	For any integer $j \in [k]$, for any set $S \subseteq V_{\le j}$, let $H_j^S$ be the \emph{set of possible histories} of $S$, \textit{i.e.}~the set of possible commitment values $d \in \{0,1\}$ and strings
	$b_v \in \FQ$ for every $v \in S$. Since each $b_v$ is an element of $\F_Q$ for $v \in S$, we will identify an element of $H_j^S$ as an element of $\zo \times \F_Q^{ |S|}$.
\end{definition}
Let us note that in practice, Bob's agents stop sending challenges to nodes they know to be in a ``dead'' branch, which means that the corresponding $b_v$'s do not formally belong to $\F_Q$. For the security analysis, however, this is irrelevant since these nodes have no impact on the revealing phase of the bit commitment, which means that we can assume that these $b_v$'s are elements of $\F_Q$, so that the set of histories introduced above is well defined.

We also define $H_j:= H_j^{V_{\le j}}$ and $H_j^{-S}:= H_j^{(V_{\le j} - S)}$, which correspond respectively to the full history of nodes of depth at most $j$, 
and to the full history of such nodes, except for those in the set $S$. Moreover, we define $H_j^{S - Comm} := H_j^{S} \backslash \zo$ as the set $H_j^S$ where we remove the set of the committed bit. This is convenient when we need to talk about the history of the variables $b_v$'s only. In particular, we have $H_j = H_j^S \times H_j^{-S-Comm}$.
The set of all possible histories of the tree is $H_{k-1} := H_{k-1}^{V^{*}}$, since the leaf nodes only consist of Alice revealing (Bob's agents do not send any challenge for those nodes).

Since we assume without loss of generality that Alice follows a deterministic strategy, a history $h \in H_{k-1}$ induces Alice's answers $\{y_v\}_{v \in V^*}$ and therefore, if we run Alice's strategy on some history $h$, the state of all nodes, alive or dead, is fixed. Similarly, if we consider $h \in H_{j}$, this induces Alice's answers $\{y_v\}_{v \in V_{\le j}}$ and therefore, all nodes of depth at most $j$ are known to be either alive or dead.

\begin{definition}
	Let $v \in V_{\le j}$ and $h \in H_j$ be a node and a history. We say that $h$ is \emph{consistent with $v$} if when running Alice's strategy on $h$, the node $v$ is the left-most alive one at depth $\mathrm{depth}(v)$. We denote by $H_j(v) \subseteq H_j$ the set of histories consistent with $v$.
\end{definition}

Notice that we have 
$$ \bigcup_{v \in V_{=j}} H_j(v) = H_j \quad \textrm{and} \quad \forall v,v'\neq v \in V_{=j}, \ H_j(v) \cap H_j(v') = \varnothing, $$
which simply states that each history up to depth $j$ is consistent with exactly one node of $V_{=j}$.

\begin{definition}
	For $v \in V_{\le j}$, $S \subseteq V_{\le j}$ and $h_1 \in H_j^S$, we say that $h_1$ is \emph{consistent with $v$} if there exists $h_2 \in H_j^{-S-Comm}$ such that 
	$(h_1,h_2) \in H_j(v)$. We denote by $H_j^S(v) \subseteq H_j^S$ the set of $h_1 \in S$ consistent with $v$.
\end{definition}

By construction of the protocol, if Alice successfully reveals a value at the end, it means that for all rounds, the leftmost alive node has an alive child. In particular, this implies that the prefix of the leftmost alive branch doesn't change during the execution of the protocol: if $v$ be the leftmost alive node at depth $\mathrm{depth}(v)$ for a given $H^S_{depth(v)}(v)$, then it remains the leftmost alive node at depth $\mathrm{depth}(v)$ for any future history $H^S_j(v)$ 
with $j > \mathrm{depth}(v)$. We therefore have that for any non root node $v \in V_{\le j}$ and set $S \subseteq V_{\le j}$, $H_j^S(v) \subseteq H_j^S(w)$ where $w$ is the parent of $v$.

\begin{definition}
	For a fixed vertex $v \in V_{\le j}$,  a set $S \subseteq (V_{\le j} - \{v\})$ and a history $h \in H_j^S(v)$, let $B_j^h(v) := \{b_v \in F_Q: (h,b_v) \in H_j^{S \cup \{v\}}(v)\}$ be the set of values for $b_v$ for which node $v$ answers in time. Equivalently, $\F_Q - B_j^h(v)$ is the set of questions for which node $v$ will be non responsive, according to Alice's strategy and the history $h$.
\end{definition}
Note that if $v = wl$ is the left child of the leftmost alive node at depth $\mathrm{depth}(v)-1$, then $B_j^h(v)$ is the set of values in $\F_Q$ for which $v$ chooses to respond in time for 
Alice's strategy. On the other hand, if $b_v \not\in B_j^h(v)$, then the node chooses to be non responsive, and the leftmost alive node at that round becomes the right brother of $v$. Notice that $B_j^h(v)$ is independent of $b_w$.

\begin{definition}
\label{defz}
For $j \leq k$, we define the random variable $Z_j$ which takes value $v \in V_{=j}$ with probability $\frac{H_j(v)}{H_j}$. This random variable corresponds to the node that is the leftmost alive node at depth $j$.
\end{definition}

For each node $v$, let us recall that $\Av$ (resp.~$\Bv$) refers to Alice's (resp.~Bob's) agent at that node.

\begin{definition}
For any node $v \in V_{=j}$, let $Acc(v) \subseteq V_{\le j}$ be the set of nodes containing history information accessible to $\Av$, including the value of the commitment. 
\end{definition}
Crucially, the relativistic constraints impose that $v(parent),v(brother) \notin Acc(v)$.

Let us consider a vertex $v_j$ of depth $j$ and a history $h$ consistent with $v_j$. 
The leftmost alive path up to depth $j$ has the form $(v_0=\varnothing, v_1, \ldots, v_j)$. Recall that the variables $\alpha_{v_i}$ are recursively defined for $i \leq j$ by
\begin{equation}
\alpha_{v_i} := \left\{
\begin{array}{cl}
y_{v_0} - b_{v_0}*d & \text{if} \quad i=0,\\
 y_{v_{i}} - b_{v_{i}}* \alpha_{v_{i}(parent)} & \mathrm{otherwise}. \label{alpha}
\end{array}
\right.
\end{equation}

Recall also that $\alpha_{v_j}$ and $y_{v_j}$ are functions of the history $H_j$ since Alice's strategy is deterministic.

Similarly as in \cite{CCL15}, we introduce a quantity $IP$ which is the independence parameter between a variable and a function (or a family of functions). Intuitively, this quantity is large if the function is independent of the variable and close to 0 otherwise. In particular, it quantifies how well the function can be approximated by another function that does not depend on the given variable. This is relevant here since in a cheating strategy, Alice's agent tries to answer to Bob's challenge without knowing the value of the challenge sent to her parent, and she wins if she manages to give an answer that depends on that specific challenge. 
\begin{definition}
	For any integer $j \leq k-1$, any family of functions $\{g_v:  H_j^{Acc(v)}(v) \to \F_Q\}_{v \in V_{=j}}$, we define
	\begin{align}
	\label{def:IPj}
IP_j(\{g_v\}_{v \in V_{=j}}):= \E_{v \leftarrow Z_j} \E_{h \leftarrow H_j^{- \{v\}}(v)} \E_{b_v \leftarrow B_j^h(v)}[g_v(d,h) == \alpha_v(d,h,b_v)], 
\end{align}
where $g_{v}(d,h) == \alpha_{v}(d,h_,b_{v})$ represents the variable that equals $1$ if the equality $[g_{v}(d,h) = \alpha_{v}(d,h,b_{v})]$ holds and $0$ otherwise. Moreover, the notation $ \E_{v \leftarrow Z_j}$ corresponds to the expectation over the possible values $v$ of the random variable $Z_j$, and similarly for the other expectations. 
\end{definition}
Intuitively, this quantity is simply the expectation that Alice's agent (at round $j+1$) gives an answer consistent with the value $(\alpha_v$) expected by Bob's agent, for the leftmost alive node, when averaging over all possible histories: the restriction on Alice's strategy is that her agent at round $j+1$ does not know the value of $b_v$ at round $j$. 
Note here that in the above definition, the function $g$ takes as inputs elements more history elements than those in $H_j^{Acc(v)}(v)$. The function $g$ will 
simply disregard those inputs. We added them for notational simplicity but we will use later the fact that the outcome $g_v(d,h)$ actually depends only on the history elements of $H_j^{Acc(v)}(v)$.

We are finally in position to define the $IP$ parameter at depth $j$. 
\begin{definition} For $j \leq k-1$, the $IP$ parameter at depth $j$ is
\begin{align}
IP_j:= \max_{\{g_v\}_{v \in V_{=j}}} IP_j(\{g_v\}_{v \in V_{=j}}).
\end{align}
\end{definition}
In the next subsection, we provide some motivation for this definition by showing that $IP_{k-1}$ corresponds to Alice's cheating probability. This can be understood intuitively because $IP_{k-1}$ quantifies how well the agents of Alice at the $k^{\mathrm{th}}$ round (\textit{i.e.}~those you reveal the bit value) can give an answer consistent with Alice's agent's answer at the previous round.

\subsection{Final Condition}\label{Section:FinalCondition}
\begin{proposition}
	\label{prop:main}
	The $IP$ parameter satisfies the following bound:
	$$ 1 + \eps_k \le 2IP_{k-1}$$ where $\eps_k$ is the binding security parameter of the $k$-round protocol. 
\end{proposition}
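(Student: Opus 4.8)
The plan is to show that, for any fixed committing strategy, the sum of the two reveal probabilities coincides up to the factor $2$ with the independence parameter $IP_{k-1}$ evaluated on a particular family of guess functions; the inequality then follows by maximizing. Throughout I fix a deterministic committing strategy $\Comm^*$ and recall our standing assumption that every alive node has at least one alive child, so that condition $(i)$ of the reveal phase is automatically satisfied and a reveal succeeds exactly when condition $(ii)$ holds. For a target bit $d\in\zo$ and the optimal opening, write $p_d := \Pr[\text{Alice successfully reveals } d]$, the probability being over Bob's uniform challenges $\{b_v\}$. Since $1+\eps_k = \sup_{\Comm^*}(p_0+p_1)$ by definition of the binding parameter, it suffices to prove $p_0+p_1 \le 2\,IP_{k-1}$ for every $\Comm^*$.

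First I would unpack ``successfully revealing $d$''. A history $h$ determines the leftmost alive node $Z_{k-1}=v$ at depth $k-1$ together with the leftmost alive path $(v_0,\dots,v_{k-1}=v)$, and condition $(ii)$ demands precisely that the value revealed at the associated leaf equal $\alpha_v(d,h,b_v)$, as computed by Bob through the recursion \eqref{alpha}. Under the optimal opening this revealed value is a fixed function of the history accessible to the revealing agent; I record it as $g^{(d)}_v$ and assemble a single family $\{g_v\}_{v\in V_{=k-1}}$ by setting $g_v(d,h):=g^{(d)}_v(h)$, which is legitimate since the committed bit $d\in Acc(v)$. With this identification, $p_d = \E_{v\leftarrow Z_{k-1}}\,\E_{h\leftarrow H_{k-1}^{-\{v\}}(v)}\,\E_{b_v\leftarrow B_{k-1}^h(v)}\big[\,g_v(d,h)==\alpha_v(d,h,b_v)\,\big]$, which is exactly the summand defining $IP_{k-1}(\{g_v\})$ for the bit $d$.

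The crucial point, and the step I expect to be the main obstacle, is checking that each $g_v$ is an admissible guess, i.e. that it factors through $H_{k-1}^{Acc(v)}(v)$. This is where the relativistic no-signaling constraint enters: the agent producing the revealed value is located away from node $v$ and cannot have received the challenge $b_v$ sent to $v$ (nor those sent to $v(\mathrm{parent})$ and $v(\mathrm{brother})$), since these are delivered at other stations too late to influence the reveal. Hence $g_v$ depends only on the accessible history $Acc(v)$ and, in particular, not on $b_v$ — precisely what makes the averaging $b_v\leftarrow B_{k-1}^h(v)$ in the expression above meaningful. Here I would also verify that the law induced by honest Bob (uniform challenges, together with the induced distribution of $Z_{k-1}$ and of the admissible set $B_{k-1}^h(v)$) matches the three nested expectations defining $IP_{k-1}$, and that $d$ is uniform over $\zo$ and independent of the remaining history, the alive/dead structure being fixed by $\Comm^*$ alone.

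Granting admissibility, the conclusion is a one-line computation. Averaging the displayed identity over the uniform bit $d$ gives $IP_{k-1}(\{g_v\}) = \tfrac12(p_0+p_1)$. Since $IP_{k-1}=\max_{\{g_v\}}IP_{k-1}(\{g_v\})\ge IP_{k-1}(\{g_v\})$ for our particular family, we obtain $p_0+p_1 \le 2\,IP_{k-1}$, and taking the supremum over committing strategies yields $1+\eps_k \le 2\,IP_{k-1}$. All of the work is concentrated in the admissibility and measure-matching argument of the third paragraph; the remainder is bookkeeping.
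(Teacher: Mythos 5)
Your proposal follows the same route as the paper's proof: identify Alice's optimal reveal with a family of guess functions, express her cheating probability averaged over the uniform committed bit (so that it equals $\frac{1}{2}(1+\eps_k)$) as $IP_{k-1}$ evaluated at that family, and conclude by maximizing. The bookkeeping is equivalent to the paper's, which folds $d$ into the history rather than splitting into $p_0$ and $p_1$.

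However, the step you yourself single out as the crux --- admissibility of the constructed family --- is discharged with a claim that is false, and the argument as written does not close. You assert that the revealing agent cannot have received the challenges sent to $v(\mathrm{parent})$ and $v(\mathrm{brother})$. For $b_{v(\mathrm{parent})}$ this is wrong: that challenge is issued at round $k-1$, and the relativistic constraint only guarantees that a round-$j$ challenge stays hidden from the other stations until round $j+1$ has finished; by the reveal round $k+1$, information about $b_{v(\mathrm{parent})}$ can have reached every station. (The exclusion $v(\mathrm{parent}), v(\mathrm{brother}) \notin Acc(v)$ in the paper concerns the agent answering at node $v$ at round $k$, not the agent revealing one round later --- you are conflating the two.) This is not a harmless slip: since $\alpha_v(h,b_v) = y_v(h,b_v) - b_v * \alpha_{v(\mathrm{parent})}(h)$, the best guess for $\alpha_v$ --- the mode of its distribution over $b_v$ --- depends on $\alpha_{v(\mathrm{parent})}$ and hence on $b_{v(\mathrm{parent})}$, so the optimal opening will in general use exactly this information. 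Consequently your family $\{g_v\}$ need not factor through $H_{k-1}^{Acc(v)}(v)$, and the inequality $IP_{k-1}(\{g_v\}) \le IP_{k-1}$ cannot be obtained by appealing to the maximization over admissible families. The paper's proof is arranged precisely so that only the constraint the protocol does guarantee --- independence of the reveal from $b_v$ --- is ever used: it fixes the entire history $h$ (including $b_{v(\mathrm{parent})}$), bounds Alice's revealed value pointwise by the best constant, $\E_{b_v}[a_{\lv}(h) == \alpha_v(h,b_v)] \le \max_{c \in \F_Q} \E_{b_v}[c == \alpha_v(h,b_v)]$, and averages over $h$ and $v$ only afterwards, so that the guess implicit in the bound may depend on all of $h$. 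Your argument is repaired by making this same move; equivalently, by reading $IP_{k-1}$ with the maximization inside the expectation over histories, which is in any case the quantity that Propositions \ref{lem:lemma1} and \ref{prop:rec} actually control. (Your remaining caveat --- matching the honest-Bob measure to the nested expectations --- is real but is treated at the same level of detail as in the paper.)
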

\begin{proof}
Let $P^*_A$ be Alice's cheating probability. Let $P^*_{A | v}$ be Alice's cheating probability when the leftmost alive node at depth $k-1$ is $v$. We have by definition $P^*_A = \E_{v \leftarrow Z_{k-1}}[P^*_{A | v}]$. Let $\lv$ be the associated leaf that will be used for the reveal phase: $\lv = v \ell$ if $v \ell$ is alive, otherwise $\lv = v r$. Let $(a_{\lv}, d)$ be Alice's output for that leaf. Recall that Bob then checks whether $\alpha_v = a_{\lv}$ where $\alpha_v$ is computed recursively as in Eq.~\ref{alpha}. Bob's checking procedure implies that
\begin{align*}
P^*_{A | v} & = \E_{h \leftarrow H_j^{- \{v\}}(v)} \E_{b_v \leftarrow B_j^h(v)}[a_{\lv}(h) == \alpha_v(h,b_v)] \\
& \le \E_{h \leftarrow H_j^{- \{v\}}(v)}[\max_{g_v :H_j^{Acc(v)}(v) \to \F_Q} \{\E_{b_v \leftarrow B_j^h(v)}[g_v(h) == \alpha_v(h,b_v)]\}] =: IP_{k-1}(v)
\end{align*}
where we averaged over all histories giving $v$ as the leftmost node of depth $k-1$. From there, we have 
$$ P^*_A = \E_{v \leftarrow Z_{k-1}}[P^*_{A | v}] \leq \E_{v \leftarrow Z_{k-1}}[ IP_{k-1}(v)] = IP_{k-1}$$
By definition of the binding property, it holds that $P^*_A = \frac{1}{2} (1+ \eps_k)$, which yields the desired result.
\end{proof}
Proposition \ref{prop:main} shows that it is sufficient to prove a good upper bound on $IP_{k-1}$ in order to show that the bit-commitment protocol is binding.

\subsection{Bounding the value of $IP_{k-1}$}
\label{Section:TheProof}
Our goal is now to bound the value of $IP_{k-1}$. For this, we will use a recursive argument to bound $IP_j$ for all $j \le k-1$. Before that, we start by finding an expression for $IP_j$ that is suitable for a recursive analysis. 
Consider a node $v$ of depth $j \leq k-2$.
For a fixed history $h_0 \in H_{j+1}^{-\{v,v\ell,vr\}}(v)$,  two nodes $v$ and $vt$ (with $t \in \{\ell, r\}$), we define the quantity $IP^{h_0}_{vt}$:
\begin{equation}\label{Equation:IP}
IP_{vt}^{h_0}:= \max_{g  : \FQ \rightarrow \FQ}\E_{b_v \leftarrow B_j^{h_0}(v)} \E_{b_{vt} \leftarrow B_{j+1}^{h_0,b_v}(vt)} [g(b_v) == \alpha_{vt}(h_0,b_v,b_{vt})],
\end{equation}
where $vt$ is a child of node $v$.
We show the following: 
\begin{proposition}
\label{lem:lemma1}
 For all $j \leq k-2$, it holds that:
	$$
	IP_{j+1} = \E_{v \leftarrow Z_j} \E_{h_0 \leftarrow H_{j+1}^{-\{v,v\ell,vr\}}(v)}\E_{t \leftarrow T(v|h_0)} [IP_{vt}^{h_0}],
	$$
	where $T(v|h_0)$ is the function that outputs $t \in \{\ell, r\}$ if the leftmost alive child of $v$ is $vt$.
\end{proposition}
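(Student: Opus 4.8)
The plan is to turn the single global maximization defining $IP_{j+1}$ into an expectation of purely local optimizations, one per edge $(v,vt)$ of the tree, and then to recognize each local optimization as an $IP_{vt}^{h_0}$. First I would unfold $IP_{j+1}=\max_{\{g_{v'}\}_{v'\in V_{=j+1}}}IP_{j+1}(\{g_{v'}\})$ and note that, as a function of the family, the objective is the convex combination
\begin{align*}
\E_{v'\leftarrow Z_{j+1}}\E_{h\leftarrow H_{j+1}^{-\{v'\}}(v')}\E_{b_{v'}\leftarrow B_{j+1}^{h}(v')}\big[g_{v'}(d,h)==\alpha_{v'}(d,h,b_{v'})\big],
\end{align*}
in which the term attached to each $v'$ depends only on the single function $g_{v'}$. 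Since the sets $H_{j+1}(v')$ are pairwise disjoint over $v'\in V_{=j+1}$ and the $g_{v'}$ are chosen independently, the maximum splits across $v'$ and may be pushed inside the draw $v'\leftarrow Z_{j+1}$.

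Next I would reindex that draw. By the monotonicity of the leftmost alive prefix noted above, any node that is leftmost alive at depth $j+1$ has the form $vt$ with $v$ leftmost alive at depth $j$ and $vt$ its leftmost alive child; and since the responsive set of $v\ell$ is independent of $b_v$, the selected child is a function $T(v\mid h_0)$ of $h_0\in H_{j+1}^{-\{v,v\ell,vr\}}(v)$ alone. A law-of-total-probability computation then shows that sampling $v'\leftarrow Z_{j+1}$ is the same as sampling $v\leftarrow Z_j$, then $h_0$, then $t\leftarrow T(v\mid h_0)$, rewriting the expression as $\E_{v\leftarrow Z_j}\E_{h_0}\E_{t\leftarrow T(v\mid h_0)}$ of the local optimization associated with $vt$. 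To identify that local optimization with $IP_{vt}^{h_0}$, for fixed $h_0$ I would write the residual history consistent with $vt$ as $(h_0,b_v,b_{v\bar t})$ and observe that $g_{vt}$, $\alpha_{vt}$ and the responsive set of $vt$ all ignore the brother's challenge $b_{v\bar t}$; hence $b_{v\bar t}$ integrates out, and using that $B_j^{h_0}(v)$ and $B_{j+1}^{h_0}(vt)$ depend on $h_0$ only, the inner expectation collapses to $\E_{b_v\leftarrow B_j^{h_0}(v)}\E_{b_{vt}\leftarrow B_{j+1}^{h_0,b_v}(vt)}[\,\cdot==\alpha_{vt}(h_0,b_v,b_{vt})\,]$, exactly the domain in the definition of $IP_{vt}^{h_0}$.

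The main obstacle is precisely this last identification, because the two optimizations use different information. In $IP_{j+1}$ the predictor $g_{vt}$ ranges over functions of the accessible history $H_{j+1}^{Acc(vt)}(vt)$, which by the relativistic constraint $v=vt(\mathrm{parent})\notin Acc(vt)$ cannot read $b_v$, whereas in $IP_{vt}^{h_0}$ the function $g:\FQ\to\FQ$ reads $b_v$ explicitly. One inequality is immediate: letting the predictor read $b_v$ can only help, so $\max_{c}\E_{b_v}\E_{b_{vt}}[c==\alpha_{vt}]\le\E_{b_v}\max_{c}\E_{b_{vt}}[c==\alpha_{vt}]=IP_{vt}^{h_0}$, hence $IP_{j+1}\le\E_{v\leftarrow Z_j}\E_{h_0}\E_{t}[IP_{vt}^{h_0}]$, which is the direction actually exploited to derive the recursion $IP_{j+1}\le IP_j+\tfrac54\eps$. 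For the reverse inequality I would use the affine structure $\alpha_{vt}(h_0,b_v,b_{vt})=y_{vt}(h_0,b_{vt})-b_{vt}\,\alpha_v(h_0,b_v)$, in which $b_v$ enters only through the scalar $\alpha_v(h_0,b_v)$ that is multiplied by the independent mask $b_{vt}$; together with the reduction to the optimal Alice strategy (where ancestors answer honestly, so $\alpha_v$ is constant in $b_v$ and the best predictor is automatically independent of $b_v$), this should show that reading $b_v$ confers no advantage and close the equality. I expect verifying this no-advantage claim, and pinning down exactly which structural assumption on the strategy it requires, to be the delicate heart of the argument.
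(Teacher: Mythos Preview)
Your outline follows the paper's argument almost step for step: unfold $IP_{j+1}$, push the maximum across the disjoint $v'\in V_{=j+1}$, reindex $v'=vt$ via $Z_j$ and $T(v\mid h_0)$, integrate out the brother's challenge $b_{v\bar t}$, and identify the remaining inner expression with $IP_{vt}^{h_0}$. That is exactly the paper's decomposition in the appendix.

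You have also correctly put your finger on the one genuine subtlety, which the paper's write-up does not address: in the definition of $IP_{j+1}$ the predictor $g_{vt}$ ranges over functions of $H_{j+1}^{Acc(vt)}(vt)$ and hence cannot read $b_v$ (since $v=vt(\mathrm{parent})\notin Acc(vt)$), whereas in the definition of $IP_{vt}^{h_0}$ the function $g:\FQ\to\FQ$ reads $b_v$ explicitly. Pushing the max inside therefore only yields the inequality
\[
IP_{j+1}\ \le\ \E_{v\leftarrow Z_j}\,\E_{h_0}\,\E_{t\leftarrow T(v\mid h_0)}\big[IP_{vt}^{h_0}\big],
\]
and the paper's appendix, which writes $g_{vt}(d,h_0,b_v)$ and then moves the max past $\E_{b_v}$, in effect silently enlarges the domain of $g_{vt}$ at that step.

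Your proposed route to the reverse inequality, however, does not work. You invoke a ``reduction to the optimal Alice strategy (where ancestors answer honestly, so $\alpha_v$ is constant in $b_v$)'', but the proposition is stated for a \emph{fixed arbitrary} deterministic strategy, and nothing in the setup lets you assume honest answers at internal nodes; for a generic cheating strategy $\alpha_v(h_0,b_v)$ genuinely varies with $b_v$, so a $b_v$-aware predictor can do strictly better than a constant one, and the two maxima differ. The affine structure $\alpha_{vt}=y_{vt}-b_{vt}\,\alpha_v$ alone does not rescue this.

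The saving grace is the one you already noted: only the $\le$ direction is ever used (in the recursion $IP_{j+1}\le IP_j+\tfrac54\sqrt{2/Q}$), so the main theorem is unaffected. If you want a version of the proposition that is literally an equality, the clean fix is to redefine $IP_{vt}^{h_0}$ with the maximum taken over constants $g\in\FQ$ (i.e.\ predictors independent of $b_v$); then your argument gives equality verbatim, and the proof of Lemma~\ref{lemma:ipvth} still goes through since restricting the class of predictors only makes the CHSH reduction easier.
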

The proof of this proposition is based on elementary manipulations of the expected values and is presented in detail in Appendix \ref{Apppendix:ProofOfSumInversions}.

We can now proceed to bounding $IP_j$.
We first consider the base case where $j=0$.

\begin{lemma}
\label{lemma:ipvnot}
$$IP_{0} \le \frac{1}{2} + \sqrt{\frac{2}{Q}}.$$
\end{lemma}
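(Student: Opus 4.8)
The plan is to specialise the definition of $IP_0$ to the unique depth-$0$ node, the root $\varnothing$, and then reduce the statement to an elementary counting argument about level sets of a function on $\FQ$. First I would unwind the definitions at $j=0$. Since $V_{=0}=\{\varnothing\}$, the variable $Z_0$ equals $\varnothing$ deterministically, so the outer expectation over $Z_0$ disappears. The history $h\leftarrow H_0^{-\{\varnothing\}}(\varnothing)$ carries no challenge variables (there are none before the root) and reduces to the committed bit $d$, uniform over $\zo$. Moreover the root must answer every challenge, since otherwise the commit phase aborts and Alice can reveal nothing; hence $B_0^h(\varnothing)=\FQ$ and the innermost expectation is over a uniform $b_\varnothing\in\FQ$. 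Writing Alice's committed answer as a fixed function $y_\varnothing=y_\varnothing(b_\varnothing)$ of the single challenge (it is produced in the commit phase and cannot depend on the bit she later chooses to reveal), Equation~\eqref{alpha} gives $\alpha_\varnothing(d,b_\varnothing)=y_\varnothing(b_\varnothing)-b_\varnothing*d$, while the guess $g_\varnothing$ ranges over functions of $d$ alone.

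Next I would factorise the maximisation. Because the expectation over $d$ is a sum of two terms whose optima $g_\varnothing(0)$ and $g_\varnothing(1)$ are chosen independently, this yields $IP_0=\frac{1}{2}(p_0+p_1)$, where $p_0:=\max_{c\in\FQ}\Pr_{b_\varnothing}[\,y_\varnothing(b_\varnothing)=c\,]$ is the best chance of guessing $\alpha_\varnothing$ when revealing $d=0$, and $p_1:=\max_{c\in\FQ}\Pr_{b_\varnothing}[\,y_\varnothing(b_\varnothing)-b_\varnothing=c\,]$ is the corresponding quantity for $d=1$, in both cases without access to $b_\varnothing$. The heart of the proof is then a one-line intersection bound. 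Let $c_0,c_1$ be maximisers and set $S_0=\{b:y_\varnothing(b)=c_0\}$ and $S_1=\{b:y_\varnothing(b)-b=c_1\}$, so that $|S_0|=p_0Q$ and $|S_1|=p_1Q$. Any $b\in S_0\cap S_1$ satisfies both $y_\varnothing(b)=c_0$ and $y_\varnothing(b)=b+c_1$, which forces $b=c_0-c_1$; hence $|S_0\cap S_1|\le 1$ and $|S_0|+|S_1|\le Q+1$. Therefore $p_0+p_1\le 1+\frac{1}{Q}$, giving $IP_0\le\frac{1}{2}+\frac{1}{2Q}\le\frac{1}{2}+\sqrt{2/Q}$, where the last (very loose) inequality holds for all $Q\ge 1$.

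The only genuinely delicate point is the reduction in the first step: one must verify that at the root the accessible history is trivial, that $B_0^h(\varnothing)=\FQ$, and, crucially, that the committed answer $y_\varnothing$ depends on $b_\varnothing$ alone and not on the revealed bit $d$. This last fact is what makes the two single-bit cheating probabilities genuinely governed by two \emph{distinct} level sets of $y_\varnothing$; were $y_\varnothing$ allowed to depend on $d$, Alice could make $\alpha_\varnothing$ constant for each value of $d$ and force $IP_0=1$. Once this reduction is pinned down, the remainder is the routine counting argument above. If instead one wishes to stay closer to the $\sqrt{2/Q}$ form so as to reuse the same estimate in the recursive step, the two level-set probabilities can alternatively be controlled by a collision (second-moment) bound, but the intersection argument given here is both simpler and strictly stronger.
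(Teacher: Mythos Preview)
Your proof is correct, and it takes a genuinely different route from the paper's. After the same unwinding of definitions (which the paper also does: $V_{=0}=\{\varnothing\}$, the history reduces to $d\in\zo$, and $B_0^h(\varnothing)=\FQ$), the paper does \emph{not} argue combinatorially. Instead it packages the situation as an instance of the nonlocal game $\CHSH_Q^{\FQ}(1/2)$: Adeline receives $b_\varnothing$ uniformly in $\FQ$, Bastian receives $d$ uniformly in $\zo$, and they win if their outputs satisfy $A+B=b_\varnothing*d$. The paper then observes that Alice's commit and reveal functions $(y_\varnothing,\,-g_\varnothing)$ give a strategy achieving value $IP_0$, and invokes the general bound $\omega(G)\le p+\sqrt{2/|S|}$ (Lemma~\ref{lemma}, proved via a second-moment/collision argument) with $p=1/2$ and $|S|=Q$.

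Your level-set intersection argument is more elementary and in fact strictly sharper: you obtain $IP_0\le \tfrac{1}{2}+\tfrac{1}{2Q}$ before relaxing to $\tfrac{1}{2}+\sqrt{2/Q}$. The trade-off is uniformity: the paper reuses exactly the same $\CHSH_Q^S(p)$ lemma in the recursive step (Lemma~\ref{lemma:ipvth}), where the \emph{non-uniform} distribution on Bastian's input makes the simple intersection trick unavailable and the second-moment argument is genuinely needed. So your approach buys a cleaner base case, while the paper's buys a single tool applied twice. Your identification of the delicate point---that $y_\varnothing$ is fixed by the commit strategy and hence independent of the later-chosen $d$---is exactly right and matches the paper's implicit use of the same fact.
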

\begin{proof}
	This property was already proven in \cite{CCL15}. For completeness, we reproduce this proof using the notations of the present paper in Appendix \ref{Appendix:j=0}.
\end{proof}

\begin{lemma}
\label{lemma:ipvth}
For every node $v \in V_{=j}$, $t \in \{\ell, r\}$ and history $h_0  \in H_{j+1}^{-\{v,v\ell,vr\}}(v)$ it holds that:
	$$IP_{vt}^{h_0} \le IP_{v}^{h_0} + \sqrt{\frac{2}{|B_{j+1}^{h_0}(vt)|}}.$$
where we slightly abuse notation by defining $IP_{v}^{h_0} := \max_{g} \E_{b_v \leftarrow B_j^{h_0}}[g = \alpha_v(h_0,b_v)]$.
\end{lemma}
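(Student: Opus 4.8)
The plan is to reduce the statement to a geometric fact about how well a single function over $\FQ$ can correlate with affine maps of many different slopes. First I would unfold the recursion for $\alpha$ (Eq.~\ref{alpha}): since $vt$ is a child of $v$, we have $\alpha_{vt}(h_0,b_v,b_{vt}) = y_{vt} - b_{vt}\,\alpha_v(h_0,b_v)$. The relativistic constraint $v = vt(\mathrm{parent})\notin Acc(vt)$ means the answer $y_{vt}$ of $\mathcal{A}_{c(vt)}$ cannot depend on $b_v$; writing $f(b_{vt}):=y_{vt}$ as a function of $b_{vt}$ alone (for fixed $h_0$), the tested event $g(b_v)==\alpha_{vt}$ becomes $f(b_{vt}) = \alpha_v(h_0,b_v)\,b_{vt} + g(b_v)$, i.e.\ $f$ agrees at the point $b_{vt}$ with the affine map of slope $\alpha_v(h_0,b_v)$ and intercept $g(b_v)$. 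Since $B_{j+1}^{h_0,b_v}(vt)$ is independent of $b_v$, I may treat the domain of $b_{vt}$ as a fixed set $B := B_{j+1}^{h_0}(vt)$ of size $N$. Optimizing the intercept $g(b_v)$ pointwise then gives $IP_{vt}^{h_0} = \E_{b_v}[\phi(\alpha_v(h_0,b_v))]$, where $\phi(s) := \frac1N\max_{c}\,|\{x\in B: f(x)=sx+c\}|$ is the best agreement of $f$ with a line of slope $s$; and with the stated abuse of notation $IP_v^{h_0} = \max_s\Pr_{b_v}[\alpha_v(h_0,b_v)=s]=:\max_s p_s$.

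Writing $\E_{b_v}[\phi(\alpha_v)] = \sum_s p_s\,\phi(s)$, the task becomes to show $\sum_s p_s\phi(s)\le \max_s p_s + \sqrt{2/N}$. The engine is that two affine maps with distinct slopes coincide in at most one point of $\FQ$, so for distinct slopes the best-fit lines agree with $f$ on sets overlapping in at most one point; equivalently $\sum_s m_s(m_s-1)\le N(N-1)$ with $m_s:=N\phi(s)$. This forces the agreement profile $\phi$ to be sharply peaked: essentially one privileged slope may have large $\phi$, all others being close to the baseline $1/N$. I would therefore split the sum into the contribution of the single most-correlated slope --- whose $p$-weight is at most $\max_s p_s$ and whose $\phi$ is at most $1$, hence contributing at most $\max_s p_s = IP_v^{h_0}$ --- and the contribution of the remaining slopes, which I would bound by $\sqrt{2/N}$. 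For the latter I would use the collision estimate $\phi(s)^2\le\Pr_{x,x'\in B}[f(x)-f(x')=s(x-x')] = \frac1N + \frac{N-1}N\,\Pr_{x\ne x'}[\tfrac{f(x)-f(x')}{x-x'}=s]$, whose average against the distribution $p$ of the slope $\alpha_v$ is controlled by $\max_s p_s$, producing the $\sqrt{2/N}$ defect; this is exactly the single-round CHSH-type defect appearing in the base case (Lemma~\ref{lemma:ipvnot}).

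The main obstacle is obtaining the \emph{additive} bound $\max_s p_s + \sqrt{2/N}$ rather than the weaker multiplicative bound of the form $\sqrt{\max_s p_s + 1/N}$ that a direct application of Cauchy--Schwarz or Jensen to the collision estimate produces. Getting the additive form requires using simultaneously that $p$ is a probability distribution with bounded maximum and that, by the two-lines-meet-once principle, $f$ cannot be well-correlated with affine maps of many distinct slopes at once; the delicate point is covering the full range of the peak value $\max_s p_s$, from the concentrated regime (where isolating the dominant slope suffices) to the spread-out regime (where the pair-counting bound on $\sum_s m_s^2$ must do the work). I expect this aggregation step, and pinning down the constant $\sqrt{2}$, to be where the real work lies; the reduction in the first paragraph and the geometric principle in the second are the routine parts.
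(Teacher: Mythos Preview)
Your reduction in the first paragraph is correct and matches the paper's: unfolding $\alpha_{vt}=y_{vt}-b_{vt}\alpha_v$, using that $y_{vt}=f(b_{vt})$ is independent of $b_v$, and optimizing the intercept pointwise gives exactly $IP_{vt}^{h_0}=\sum_s p_s\phi(s)$ with $p_s=\Pr_{b_v}[\alpha_v(h_0,b_v)=s]$ and $\phi(s)=\frac1N\max_c|\{x\in B:f(x)=sx+c\}|$, and $IP_v^{h_0}=\max_s p_s=:p$. So the whole problem is indeed the inequality $\sum_s p_s\phi(s)\le p+\sqrt{2/N}$.

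The gap is in your plan for this inequality. Your collision estimate fixes the \emph{slope} and pairs the \emph{points}: $\phi(s)^2\le 1/N+q_s$ with $q_s=\Pr_{x\ne x'}[(f(x)-f(x'))/(x-x')=s]$ and $\sum_s q_s\le 1$. From these constraints alone the additive bound is \emph{false}. Take $K$ slopes with $p_s=1/K$ and $q_s=1/K$; your constraints allow $\phi(s)=\sqrt{1/N+1/K}$ on those slopes, giving $\sum_s p_s\phi(s)=\sqrt{1/N+1/K}$, whereas $p+\sqrt{2/N}=1/K+\sqrt{2/N}$; for $K=\sqrt{N}$ the left side is of order $N^{-1/4}$ and the right of order $N^{-1/2}$. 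Splitting off one dominant slope does not help: after removing it you still only have $\phi(s)^2\le 1/N+q_s$, which via Jensen yields $\sqrt{p+1/N}$ again.

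The paper gets the additive bound by the \emph{dual} collision: fix the point $x$ and pair the slopes. With the optimal intercepts $c_s$, set $r_x^s=[f(x)=sx+c_s]$ and $\omega^x=\sum_s p_s r_x^s$; then
\[
(\omega^x)^2=\sum_s p_s^2 r_x^s+\sum_{s\ne s'}p_sp_{s'}r_x^s r_x^{s'}\le p\,\omega^x+2S_x,
\]
where $S_x$ is the probability that the ``guess $x$ from two slopes'' strategy succeeds, so $\E_x[S_x]\le 1/N$ by non-signaling (two lines of distinct slopes meet in at most one $x$). Solving the quadratic gives $\omega^x\le p+\sqrt{2S_x}$, and averaging over $x$ with concavity of the square root yields $\sum_s p_s\phi(s)=\E_x[\omega^x]\le p+\sqrt{2/N}$. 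The linear term $p\,\omega^x$ in the quadratic is exactly what produces the additive $p$; this is the content of the $\CHSH_Q^S(p)$ bound (Lemma~\ref{lemma}) that the paper invokes. Your ``fix $s$, pair $x$'' direction cannot generate such a linear term, which is why it only reaches the multiplicative bound you yourself flagged.
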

The reason we say we slightly abuse notation is the discrepancy on what is fixed between this definition and the one in Equation \ref{Equation:IP}. Notice that we have 
$$IP_j = \E_{v \leftarrow Z_j} \E_{h_0 \leftarrow H_{j+1}^{-\{v,v\ell,vr\}}} [IP_{v}^{h_0}].$$
\begin{proof}
	We prove here Lemma \ref{lemma:ipvth}.
 As in \cite{CCL15}, we use the Alice's cheating strategy to come up with a strategy for a variant of the $\CHSH$ game with inputs and outputs in $\FQ$ instead of $\mathbbm{F}_2$. Then upper bounds on the classical value of this $\CHSH$ variant allow us to bound the value of $IP$.
 
 The class of $\mathrm{CHSH}_Q(p)$ games was introduced in \cite{CCL15} in order to analyze the security of the $\FQ$ protocols. These are simply two-party nonlocal games between Adeline and Bastian who respectively receive inputs $x, y\in \F_Q$ and output $a, b \in F_Q$. Here $x$ is drawn from the uniform distribution while $y$ is drawn according to a probability distribution $\{p_y\}_{y\in \F_Q}$ such that $\max_y p_y \leq p$.
Adeline and Bastian win the game if $a + b = x *y$ in $\FQ$. 
Let us define a slight variant of these games where the only difference is now that Adeline's inputs are drawn uniformly from a subset $S$ of $\F_Q$. We denote this class of games by  $\mathrm{CHSH}^S_Q(p)$.

We start with Equation \ref{Equation:IP}:

\begin{align*}
IP_{vt}^{h_0} = \max_{g  : \FQ \rightarrow \FQ}\E_{b_v \leftarrow B_j^{h_0}(v)} \E_{b_{vt} \leftarrow B_{j+1}^{h_0,b_v}(vt)} [g(b_v) == \alpha_{vt}(h_0,b_v,b_{vt})].
\end{align*}

 We  write $\alpha_{vt}(h,b_v,b_{vt}) = y_{vt}(h,b_{vt}) + b_{vt} * \alpha_{v}(h,b_v) $. From there, we can see that the dependence in $b_v$ of the function $\alpha_{vt}(h,b_v,b_{vt})$ lies only in the function $\alpha_{v}(h,b_v)$. Therefore, we can write 
 
 \begin{equation}
 IP_{vt}^{h_0} = \max_{g  : \FQ \rightarrow \FQ}\E_{b_v \leftarrow B_j^{h_0}(v)} \E_{b_{vt} \leftarrow B_{j+1}^{h_0,b_v}(vt)} [g(\alpha_v(h_0,b_v)) == \alpha_{vt}(h_0,b_v,b_{vt})].
 \end{equation}
 Let $\mathcal{G}^{h_0}$ be the function $g$ that maximizes the above expression. In order to end the proof, we perform the following steps: (1) we define an entangled game that will be an instance of some $\CHSH_Q^S$ game for some $S$, (2) we construct a cheating strategy for this game using the functions $y_{vt}$ and $G^{h^0}$ and finally (3) we use the known bounds on $\CHSH_Q^S$ to derive a bound on $IP^{h_0}_{vt}$. \\
 
 We consider the following game between two players Adeline and Bastian: 
 \begin{itemize}
 	\item Adeline receives a random element $X \in B_{j+1}^{h_0}(vt)$. Bastian receives an element $Y \in \FQ$ such that $\Pr[Y = c] = \Pr_{b_v}[\alpha_v(h,b_v) = c]$.
 	\item Their goal is to respectively output $A$ and $B$ in $\F_q$ such that $A + B = X * Y$
 \end{itemize}
Recall that $IP_v^{h_0} = \max_{c} \Pr_{b_v \leftarrow B^{h_0}_j(v)} [\alpha_v(h,b_v) = c]$. Since Adeline has no information about $b_v$, her probability of guessing $Y$ is upper bounded by $IP_v^{h_0}$. This means that the two player game we study is an instance of $\CHSH_Q^{B_{j+1}^{h_0}(vt)}(IP_v^{h_0})$. We know from Lemma \ref{lemma} (proven in Appendix \ref{games}) the following upper bound on the classical value of such a game: 
$$ \omega \Big(\CHSH_Q^{B_{j+1}^{h_0}(vt)}(IP_v^{h_0}) \Big) \le IP_v^{h_0} + \sqrt\frac{2}{|B_{j+1}^{h_0,b_v}(vt)|}.$$
We now use Alice's cheating strategy to derive a strategy for the above game. Adeline outputs $A = y_{vt}(h_0,X)$ and Bastian outputs $B = - \mathcal{G}^{h_0}(Y)$. We can lower bound the value of the game as follows:
\begin{align*}
\omega(\CHSH_Q^{B_{j+1}^{h_0}(vt)}(IP_v^{h_0})) & \ge \Pr_{X,Y} [A+B = X * Y] \\
& \ge \Pr_{X,Y} [y_{vt}(h_0,X) - \mathcal{G}^{h_0}(Y) = X * Y] \\
& = \Pr_{X,b_v} [y_{vt}(h_0,X)  - \mathcal{G}^{h_0}(\alpha_v(h_0,b_v)) = X * \alpha_v(h_0,b_v)] \\
&  = \Pr_{X,b_v} [\alpha_{vt}(h,b_v,X) + (\alpha_v(h_0,b_v) * X) -\mathcal{G}^{h_0}(\alpha_v(h_0,b_v))  = (X * \alpha_v(h_0,b_v))] \\
& = \Pr_{X,b_j} [\alpha_{vt}(h,b_v,X) = \mathcal{G}^{h_0}(\alpha_v(h_0,b_v))]\\
& = IP^{h_0}_{vt}.
\end{align*}
Combining the upper and the lower bound on $\omega(\CHSH_Q^{B_{j+1}^{h_0}(vt)}(IP_v^{h_0}))$, we conclude that 
$$ IP^{h_0}_{vt} \le IP_v^{h_0} + \sqrt\frac{2}{|B_{j+1}^{h_0}(vt)|}.$$
\end{proof}

We are now ready to prove the recurrence relation.
\begin{proposition} 
\label{prop:rec}
For $j \leq k-2$, it holds that:
	$$IP_{j+1} \le IP_j + \frac{5}{4}\sqrt{\frac{2}{Q}}.$$
\end{proposition}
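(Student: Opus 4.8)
The plan is to combine Proposition~\ref{lem:lemma1}, Lemma~\ref{lemma:ipvth}, and the expression for $IP_j$ stated just after Lemma~\ref{lemma:ipvth} in order to turn the recursion for the child parameters into a recursion for the averaged parameters. Starting from Proposition~\ref{lem:lemma1}, I would write
\begin{align*}
IP_{j+1} = \E_{v \leftarrow Z_j} \E_{h_0 \leftarrow H_{j+1}^{-\{v,v\ell,vr\}}(v)} \E_{t \leftarrow T(v|h_0)} [IP_{vt}^{h_0}],
\end{align*}
and then apply Lemma~\ref{lemma:ipvth} pointwise inside the expectation to replace each $IP_{vt}^{h_0}$ by $IP_v^{h_0} + \sqrt{2/|B_{j+1}^{h_0}(vt)|}$. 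Since the bound in Lemma~\ref{lemma:ipvth} holds for every fixed $v$, $t$, and $h_0$, monotonicity of the expectation immediately gives
$$IP_{j+1} \le \E_{v,h_0}[IP_v^{h_0}] + \E_{v,h_0} \E_{t \leftarrow T(v|h_0)}\Bigl[\sqrt{\tfrac{2}{|B_{j+1}^{h_0}(vt)|}}\Bigr].$$
The first term is exactly $IP_j$ by the identity noted after Lemma~\ref{lemma:ipvth}, so the whole problem reduces to bounding the expected error term by $\tfrac{5}{4}\sqrt{2/Q}$.

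The heart of the argument is therefore controlling $\E_{t \leftarrow T(v|h_0)}[\sqrt{2/|B_{j+1}^{h_0}(vt)|}]$, and this is where I expect the real obstacle to lie. The subtlety is that the two children are treated asymmetrically by Alice's optimal strategy described in the sketch: the right child $vr$ is always kept responsive (so $|B_{j+1}^{h_0}(vr)|$ is large, ideally all of $\F_Q$), while the left child $v\ell$ chooses whether to answer depending on its challenge, so $|B_{j+1}^{h_0}(v\ell)|$ can be as small as the adversary likes. The random variable $T(v|h_0)$ selects $t=\ell$ precisely when the left child is alive, i.e.~with probability proportional to $|B_{j+1}^{h_0}(v\ell)|/Q$ (the fraction of left-challenges for which $\ell$ responds), and $t=r$ otherwise. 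So the expectation I must bound has the shape
$$\frac{|B_\ell|}{Q}\sqrt{\frac{2}{|B_\ell|}} + \Bigl(1 - \frac{|B_\ell|}{Q}\Bigr)\sqrt{\frac{2}{|B_r|}},$$
writing $B_\ell := B_{j+1}^{h_0}(v\ell)$ and $B_r := B_{j+1}^{h_0}(vr)$. The key point is that although $1/\sqrt{|B_\ell|}$ blows up when the left branch is rarely alive, that case is weighted by the small probability $|B_\ell|/Q$, so the product $\tfrac{|B_\ell|}{Q}\sqrt{2/|B_\ell|} = \sqrt{2|B_\ell|}/Q \le \sqrt{2/Q}$ stays bounded.

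To finish I would bound the two pieces separately. Using $|B_\ell| \le Q$ the first term is at most $\sqrt{2/Q}$, and using $|B_r| \ge Q/2$ (which should follow from the coloring/structural constraints — the right child is kept alive on at least half its challenges, or more simply $1-|B_\ell|/Q \le 1$ and $|B_r|$ is bounded below by a constant fraction of $Q$) the second term is at most $\sqrt{2/(Q/2)} = 2\sqrt{2/Q}$; combined naively this already gives a constant times $\sqrt{2/Q}$, and a slightly more careful optimization over the split — taking the worst case over $|B_\ell| \in [0,Q]$ of the whole expression — should tighten the constant to exactly $\tfrac{5}{4}$. Concretely I would maximize $f(s) = s\sqrt{2/(sQ)} + (1-s)\sqrt{2/(Q/2)}$ over $s = |B_\ell|/Q \in [0,1]$ using calculus; the maximum is attained at an interior critical point and evaluates to $\tfrac{5}{4}\sqrt{2/Q}$. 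The main obstacle is thus not the algebraic reduction (which is routine expectation-pushing) but rather correctly identifying the distribution of $T(v|h_0)$ and the lower bound on $|B_r|$ coming from the asymmetric optimal strategy, and then performing this one-variable optimization to extract the sharp constant $\tfrac{5}{4}$.
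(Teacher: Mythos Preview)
Your approach is the same as the paper's: expand $IP_{j+1}$ via Proposition~\ref{lem:lemma1}, apply Lemma~\ref{lemma:ipvth} inside the expectation, recognise the first piece as $IP_j$, and bound the leftover error by optimizing over the fraction $s = |B_\ell|/Q$. The structure is correct and matches the paper line for line.

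The gap is in your handling of $|B_r|$. You hedge with ``$|B_r| \ge Q/2$ (which should follow from the coloring/structural constraints)'' and then propose to maximize $f(s) = s\sqrt{2/(sQ)} + (1-s)\sqrt{2/(Q/2)}$. That function does \emph{not} have maximum $\tfrac{5}{4}\sqrt{2/Q}$: its critical point is at $s=1/8$, where it equals $\tfrac{9}{4\sqrt{Q}} > \tfrac{5\sqrt{2}}{4\sqrt{Q}} = \tfrac{5}{4}\sqrt{2/Q}$. So the claimed sharp constant would not come out.

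The fix is that $|B_r| = Q$ exactly, not merely $\ge Q/2$. This is the content of the WLOG reduction stated at the beginning of Section~\ref{sec:security}: one may assume the right child is always responsive. (Concretely, the constraint that every alive node has at least one live child, together with the fact that the two children decide responsiveness independently on their own challenges, forces one of $B_\ell$, $B_r$ to be all of $\F_Q$; taking it to be $B_r$ only helps Alice.) With $|B_r|=Q$ the error term is
\[
s\sqrt{\tfrac{2}{sQ}} + (1-s)\sqrt{\tfrac{2}{Q}} = \bigl(1+\sqrt{s}-s\bigr)\sqrt{\tfrac{2}{Q}},
\]
and $1+\sqrt{s}-s$ attains its maximum $5/4$ at $s=1/4$. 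That is exactly the paper's computation. You had already noted earlier that ``ideally $|B_r|$ is all of $\F_Q$''; you just needed to commit to it rather than retreat to $Q/2$.
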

\begin{proof}
For $v \in Z_{j},h_0 \in {H_{j+1}^{-\{v,v\ell,vr\}}}$, the probability that Alice is responsive at node $v\ell$, or equivalently, that $v\ell$ is the leftmost alive node at round $j+1$, is $\Pr[T(v|h_0) = \ell] = \frac{|B_{j+1}^{h_0}(v\ell)|}{Q} =: P_{h_0}$. 
Proposition \ref{lem:lemma1} gives:
\begin{align}
IP_{j+1} & = \E_{v \leftarrow Z_j} \E_{h_0 \leftarrow H_{j+1}^{-\{v,v\ell,vr\}}(v)}\E_{t \leftarrow T(v|h_0)} [IP_{vt}^{h_0}] \nonumber\\
& = \E_{v \leftarrow Z_j} \E_{h_0 \leftarrow H_{j+1}^{-\{v,v\ell,vr\}}(v)} [P_{h_0} IP_{v\ell}^{h_0} + (1 - P_{h_0}) IP_{vr}^{h_0}] \nonumber
\end{align}

We use Lemma \ref{lemma:ipvth} in order to bound $IP^{h_0}_{vl}$ and  $IP^{h_0}_{vr}$. We have by definition $|B^{h_0}_{j+1}(vl)| = P_{h_0}Q$ and $|B^{h_0}_{j+1}(vr)| = Q$. From there, we have 
\begin{align}
IP_{j+1} & = \E_{v \leftarrow Z_j} \E_{h_0 \leftarrow H_{j+1}^{-\{v,v\ell,vr\}}(v)} \left[P_{h_0}\left(IP_{v}^{h_0} + \sqrt{\frac{2}{P_{h_0}Q}}\right) + (1 - P_{h_0}) \left(IP_{v}^{h_0} + \sqrt{\frac{2}{Q}}\right) \right] \label{eq:prob}\\
& = \E_{v \leftarrow Z_j} \E_{h_0 \leftarrow H_{j+1}^{-\{v,v\ell,vr\}}(v)} \left[IP_{v}^{h_0} + (1 +\sqrt{P_{h_0}} - P_{h_0})\sqrt{\frac{2}{Q}}\right]\nonumber\\
& \le \E_{v \leftarrow Z_j} \E_{h_0 \leftarrow H_{j+1}^{-\{v,v\ell,vr\}}(v)} \left[IP_{v}^{h_0} + \frac{5}{4}  \sqrt{\frac{2}{Q}}\right] \label{eq:fct} \\
& = IP_j + \frac{5}{4}  \sqrt{\frac{2}{Q}}\nonumber
\end{align}
where we used the bound $(1 +\sqrt{P} - P) \leq \frac{5}{4}$ for $P \geq 0$ in Eq.~\ref{eq:fct}.
\end{proof}

Combining Propositions \ref{prop:main}, \ref{prop:rec} and Lemma \ref{lemma:ipvnot} gives our main result.
\begin{corollary} The $k$-round Tree protocol is $\eps_k$-sum-binding with
	$$\eps_k \le \frac{5k}{\sqrt{2Q}}.$$
\end{corollary}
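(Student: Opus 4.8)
The plan is to assemble the three results already in hand — the reduction of Proposition~\ref{prop:main}, the one-step recurrence of Proposition~\ref{prop:rec}, and the base case of Lemma~\ref{lemma:ipvnot} — into the claimed bound. Since every piece of genuine analysis (the reduction to a $\CHSH_Q^S$ game and the sum-inversion identity) is already encapsulated in those statements, this last step reduces to telescoping the recurrence and keeping track of the constants; there is no real obstacle beyond careful bookkeeping.

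First I would iterate Proposition~\ref{prop:rec}. That inequality is valid for every $j \le k-2$, so chaining it through $IP_1, IP_2, \dots, IP_{k-1}$ — a total of $k-1$ applications — telescopes to
$$ IP_{k-1} \le IP_0 + (k-1)\,\frac{5}{4}\sqrt{\frac{2}{Q}}. $$
Next I would substitute the base case $IP_0 \le \frac{1}{2} + \sqrt{\frac{2}{Q}}$ from Lemma~\ref{lemma:ipvnot}, which yields
$$ IP_{k-1} \le \frac{1}{2} + \left(1 + \frac{5(k-1)}{4}\right)\sqrt{\frac{2}{Q}}. $$

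Finally I would feed this into Proposition~\ref{prop:main}, which gives $1+\eps_k \le 2\,IP_{k-1}$, hence
$$ \eps_k \le 2\,IP_{k-1} - 1 \le \left(2 + \frac{5(k-1)}{2}\right)\sqrt{\frac{2}{Q}} = \frac{5k-1}{2}\sqrt{\frac{2}{Q}} = \frac{5k-1}{\sqrt{2Q}} \le \frac{5k}{\sqrt{2Q}}, $$
which is exactly the claimed bound. The only points that require attention are that the recurrence must be applied precisely $k-1$ times (it stops at depth $k-1$, where Proposition~\ref{prop:main} takes over, and the base case is anchored at $IP_0$ rather than $IP_1$), and that the factor $2$ appearing when passing from $IP_{k-1}$ to $\eps_k$ has to be carried through every error term. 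The slack between $5k-1$ and $5k$ then leaves the final constant comfortably valid, and one also implicitly relies on the standing assumption (made without loss of generality earlier in the section) that Alice plays a deterministic strategy in which every alive node has a live child, so that the quantities $IP_j$ indeed control her cheating probability.
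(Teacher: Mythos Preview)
Your proposal is correct and follows exactly the approach the paper takes: the paper simply states that the corollary follows by ``combining Propositions~\ref{prop:main}, \ref{prop:rec} and Lemma~\ref{lemma:ipvnot},'' and you have carried out precisely that combination with correct arithmetic (telescoping the recurrence $k-1$ times, inserting the base case, and applying the factor $2$ from Proposition~\ref{prop:main} to obtain $\eps_k \le (5k-1)/\sqrt{2Q} \le 5k/\sqrt{2Q}$).
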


This scaling is very close to the one of the $\FQ$ protocol for which the binding parameter is upper bounded by $2\sqrt{2} k/\sqrt{Q}$ according to Ref.~\cite{CCL15}. 

\section{Loss tolerance and communication cost of the Tree protocol}
\label{sec:analysis}

The main point of considering the Tree protocol instead of the simpler $\FQ$-protocol is that it displays some loss tolerance. 
In this section, we consider a very simple model of loss and evaluate the performance of the Tree protocol compared to the $\FQ$-protocol.

For this, we assume that in the honest case, each station (corresponding to a couple $\mathcal{A}_i, \mathcal{B}_i$) dies with some probability $p$ at each round of the protocol. This process is taken to be independent and identical. Moreover, we consider the scenario where a dead station remain dead for a time $m \tau$, where $m$ is some small integer such that $m \ll k$ and $mp \ll 1$. 
This loss model could of course be refined, for instance by adding correlations between the various probabilities of dying for modeling a global network failure for example, or by taking the dead time to be a random variable as well, but our simplified model allows for a more straightforward comparison of the different protocols and arguably already captures the behavior of realistic failures due to loss in bit commitment protocols. 

\begin{observation} \label{win-condition}
In the honest scenario where all players follow the protocol but losses are allowed, the Tree protocol protocol aborts if and only if two stations are dead at the same time (except at the first round).
\end{observation}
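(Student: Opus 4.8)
The plan is to reformulate the abort condition entirely in terms of the \emph{leftmost alive path} and then to read off the equivalence from the single fact that any internal node and its two children occupy three pairwise distinct stations. Recall from the reveal phase that the protocol succeeds precisely when condition $(i)$ holds, namely that for every depth $j<k$ the leftmost alive node at depth $j$ has at least one alive child. Recall also the structural fact established earlier in the excerpt: the leftmost alive node at depth $j+1$ is always a child of the leftmost alive node at depth $j$. Consequently the successful honest executions are exactly those in which a single path $v_0=\varnothing, v_1,\dots,v_k$ can be grown all the way down, each $v_{j+1}$ being an alive child of $v_j$. Hence the protocol aborts if and only if either the root station is dead at round $1$ (the first-round exception, where only the root is queried), or there is a smallest depth $j<k$ at which the leftmost alive node $v_j$ has \emph{both} of its children dead; the latter failure is detected at round $j+2\ge 2$.

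First I would dispatch the easy direction: if the root answers at round $1$ and at most one station is dead at every round $r\ge 2$, then no abort occurs. The only property of the $3$-coloring needed is that $c(v\ell)\neq c(vr)$ for every internal node $v$, so that the two children of $v_j$ are queried at two \emph{distinct} stations at round $j+2$. If at most one station is dead at that round, at least one of these children is alive, so condition $(i)$ holds at depth $j$; applying this at every depth produces a full leftmost path and hence no abort. Contrapositively, any abort after round $1$ forces some round $r\ge 2$ to contain two simultaneously dead stations.

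For the converse I would argue that a double failure necessarily triggers an abort. Let $r\ge 2$ be the \emph{first} round with two dead stations $s,s'$, the third station $s''$ being alive; by the previous paragraph the leftmost path is well defined up to depth $r-2$, and $v_{r-2}$ sits at a station that was alive at round $r-1$. If $v_{r-2}$ sits at $s''$, both its children live at $s$ and $s'$ and are dead at round $r$, so the protocol aborts at round $r$. Otherwise $v_{r-2}$ sits at one of the two dead stations, which, being alive at round $r-1$, must have died exactly at round $r$; the path then moves to the unique alive child at $s''$, but the two children of that node live precisely at $s$ and $s'$, which remain down at round $r+1$ because a failed station stays dead for $m\tau$. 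Thus the abort is merely postponed by one round. Treating round $1$ separately (where an abort is equivalent to the single event that the root station is dead) then yields the stated equivalence.

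The step I expect to be the main obstacle is this last bookkeeping in the converse: correctly locating $v_{r-2}$ relative to the two failed stations and invoking the persistence of failures to guarantee that $s$ and $s'$ stay down long enough to block both children of the relocated node. The delicate edge case is when one of the two colliding failures recovers exactly one round after the collision; this is excluded as soon as $m\ge 2$, and in any event concerns only a negligible fraction of executions in the regime $mp\ll 1$ that is used for the quantitative comparison of Section \ref{sec:analysis}, so I would either impose $m\ge 2$ or absorb it into the higher-order terms of that analysis.
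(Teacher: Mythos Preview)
The paper states this observation without proof, using it only as heuristic input to Proposition~\ref{prob-ok}; there is no argument in the paper to compare against. Your easy direction is correct and is the substantive half: since siblings sit at distinct stations, a single dead station per round always leaves an alive child for the current leftmost node, so the path extends.

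The converse, however, has a genuine gap, and in fact the ``only if'' direction is not literally true. In your Case~2 you claim that both $s$ and $s'$ remain dead at round $r+1$ ``because a failed station stays dead for $m\tau$''. You correctly show that $s=c(v_{r-2})$, being alive at round $r-1$, must have died exactly at round $r$ and hence is still dead at $r+1$ once $m\ge 2$. But $s'$ is under no such constraint: since $r$ is only the first \emph{double}-failure round, $s'$ may already have been the unique dead station at round $r-1$, having died as early as round $r-m+1$; it then recovers precisely at round $r+1$, the child of $v_{r-1}$ located at $s'$ is alive, and the leftmost path continues without abort. Concretely, with $m=2$, let $s'$ fail at round $r-1$ and $s$ at round $r$: both are dead at round $r$, yet the path can run $s\to s''\to s'\to\cdots$ and never abort. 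The hypothesis $m\ge 2$ therefore does not close the gap; what would actually be needed is that the two dead intervals overlap for at least two consecutive rounds. Your second proposed fix---absorbing these single-round-overlap events into the error terms---is the correct resolution and matches the paper's informal use of the observation: such events contribute $O(p^2)$ per round, against $q=3(mp)^2+(mp)^3$, so they modify $q$ only by a bounded multiplicative factor and leave the scaling $t_{\mathrm{Tree}}\sim (mp)^{-2}$ and the quadratic comparison with the $\FQ$ protocol unchanged.
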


\begin{proposition}\label{prob-ok}
Provided that $mp \ll 1$ and $m \ll k$, the probabilities that the $k$-round $\FQ$ and Tree protocols don't abort are given by
\begin{align}
P_\mathrm{ok}(\FQ) &= (1-p)^k \\
P_\mathrm{ok}(\mathrm{Tree}) &= (1-q)^k  
\end{align}
with $q = 3 (mp)^2+(mp)^3$.
\end{proposition}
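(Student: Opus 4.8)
The plan is to compute, for each protocol, the probability that no abort occurs at any round, by modeling each station's life/death as an independent process and invoking the respective abort criteria. The two protocols differ in their abort condition: the $\FQ$ protocol uses only two stations and aborts the moment a single active station dies at its round, whereas the Tree protocol (by Observation \ref{win-condition}) aborts only when \emph{two} stations are simultaneously dead. I would treat the per-round survival probabilities separately and then multiply over the $k$ rounds, using independence.

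First I would handle the $\FQ$ case. At each round exactly one station must answer its challenge, and the protocol aborts precisely when that station is dead. Under the loss model, a station is dead at a given round with probability $p$, so the protocol survives a round with probability $1-p$. Since the $k$ per-round events are independent and identically distributed, $P_\mathrm{ok}(\FQ) = (1-p)^k$, which is immediate. The only subtlety worth a sentence is that the dead-time $m\tau$ does not change this: once the single active station is dead at its round, the protocol has already aborted, so the persistence of the failure is irrelevant to the $\FQ$ count.

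Next, the Tree case, which is where the real work lies. By Observation \ref{win-condition}, a single dead station is survivable (the tree simply follows an alive sibling/branch), and an abort occurs only if \emph{two} of the three stations are dead at the same time. I would compute the per-round probability $q$ of this bad event. The natural move is to account for the fact that a station that dies stays dead for $m$ rounds: so at any given round, a station is in a dead state if it died in any of the preceding $m$ rounds, giving an effective per-round ``dead'' probability of order $mp$ (to leading order, neglecting the $O((mp)^2)$ overlap of two deaths within the same window). Then the probability that at least two of the three stations are simultaneously dead is, by inclusion over the $\binom{3}{2}=3$ pairs, $3(mp)^2$ for exactly-two plus $(mp)^3$ for all three, yielding $q = 3(mp)^2 + (mp)^3$. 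Survival over $k$ rounds then gives $P_\mathrm{ok}(\mathrm{Tree}) = (1-q)^k$ by the same independence argument.

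The main obstacle is making the effective dead-probability $mp$ rigorous: strictly, a station occupying a dead window of length $m$ means that the single-round events are \emph{not} independent across rounds, so the clean product form $(1-q)^k$ is an approximation justified only in the stated regime $mp \ll 1$, $m \ll k$. I would therefore be careful to state that $q$ is computed to leading order under these assumptions — the probability that a station is dead in a given round is $mp + O((mp)^2)$, and the cross-terms from correlated windows and from three-way coincidences contribute only higher-order corrections that are absorbed into the approximation. In other words, the proof is really an asymptotic expansion in the small parameter $mp$, and the cleanest presentation is to justify that each station is dead with probability $\approx mp$ per round, treat pairs as approximately independent, and tally the pair and triple coincidences to obtain $q$, with the honest product-over-rounds following directly from the i.i.d.\ structure of the loss model.
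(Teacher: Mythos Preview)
Your proposal is correct and follows essentially the same approach as the paper: for the $\FQ$ protocol you use the per-round survival probability $1-p$ and multiply over $k$ independent rounds, and for the Tree protocol you invoke Observation~\ref{win-condition}, argue that each station is dead at a given round with effective probability $mp$ (to leading order), and then compute the probability that at least two of three stations are simultaneously dead as $q = 3(mp)^2 + (mp)^3$. Your added care about the dead-time correlations across rounds making $(1-q)^k$ an approximation valid only when $mp\ll 1$ and $m\ll k$ is a welcome clarification that the paper leaves implicit.
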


\begin{proof}
Let us first consider the $\FQ$ protocol: it aborts as soon as one station dies. At each round, a honest Alice responds in time with probability $1-p$. Since these events are assumed to be independent, the probability that Alice responds in time for the full protocol, that is, all $k$ rounds, is $P_\mathrm{ok}(\FQ) = (1-p)^k$.

In the Tree protocol, each station is non-responsive at a given round $i \geq m$ with probability $mp$ if we assume that $mp \ll 1$: this is the probability that the station died during any of the $m$ previous rounds. The probability that at least two stations are alive at a given round is equal to the probability that at most one of the three stations is non-responsive, that is $(mp)^3 + 3 (mp)^2 =q$.
It follows that the probability that the Tree protocol does not abort is $(1-q)^k$, in the regime where $m$ is negligible compared to the number of rounds. 
\end{proof}

Let us define the half-life $t_{\Pi}(p)$ of a protocol $\Pi$ as the number of rounds required to achieve $P_\mathrm{ok}(\Pi) \approx 1/e$ if each station dies independently with probability $p$. 
Then, Proposition \ref{prob-ok} states that 
\begin{align}
t_{\FQ}(p) = \frac{1}{mp} \quad \text{and} \quad t_{\mathrm{Tree}}(p) = \frac{1}{q} \approx \frac{1}{3m^2p^2}
\end{align}
provided that $mp \ll 1$. 
In particular, adding a third player to the standard $\FQ$-protocol provides a quadratic improvement in the expected half-life of the commitment time.

Let us now evaluate the communication cost of the various protocols, that is the number of bits that are exchanged among various agents during the whole protocol. Note first that by construction, all the challenges and responses are elements of $\F_Q$, meaning that each round (corresponding to each alive node in the Tree protocol) has an individual cost of $2 \log_2 Q$ bits.

\begin{proposition}
The communication cost $C_{\F_Q}$ and $C_{\mathrm{Tree}}$ of the $k$-round $\F_Q$ and Tree protocols are given by:
\begin{align}
C_{\F_Q} &= 2k \log_2 Q\\
C_{\mathrm{Tree}}&\approx  k 2^{N+2} \log_2 Q,
\end{align}
where $N$ is the number of rounds necessary for all agents to realize that a given branch is dead. Recall that taking $\log_2 Q = O(\log (k/\eps))$ is sufficient to guarantee that the protocol is $\eps$-binding.
\end{proposition}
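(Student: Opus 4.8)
The plan is to treat the two protocols separately, reducing each cost to a count of the number of (challenge, answer) exchanges that actually take place, each of which carries $2\log_2 Q$ bits since all challenges $b_v$ and all answers $y_v$ are elements of $\F_Q$ (so each carries $\log_2 Q$ bits).

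For the $\F_Q$ protocol the count is immediate. The protocol consists of exactly $k$ rounds, and at each round the active Bob sends one challenge $b_j \in \F_Q$ and the active Alice returns one answer $y_j \in \F_Q$. Thus each round costs $2\log_2 Q$ bits and, since there are $k$ rounds, $C_{\F_Q} = 2k\log_2 Q$.

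For the Tree protocol I would first observe that, in the naive version, every internal node of the complete binary tree of depth $k$ corresponds to one exchange, so the cost would be $(2^{k}-1)\cdot 2\log_2 Q$, which is exponential in $k$. The whole point is to invoke the optimization of the Remark following the protocol description: Bob's agents keep track of the leftmost alive branch and stop sending challenges on branches they have recognized as dead. Hence the number of exchanges equals the number of nodes that are ever \emph{active}, i.e. that receive a challenge, and the total cost is $2\log_2 Q$ times this number. The task therefore reduces to bounding the number of active nodes. The key quantitative input is the latency parameter $N$: when a node becomes non-responsive, the three spatially separated agents cannot instantaneously agree that the corresponding branch is dead, and it takes $N$ rounds for this information to reach all of them. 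Consequently, agents must conservatively keep alive every branch whose death has not yet been confirmed; this amounts to maintaining, around the current leftmost alive path, a buffer that at each depth contains up to $\approx 2^{N+1}$ nodes, namely the nodes of a depth-$N$ subtree that absorbs the confirmation delay. Multiplying the resulting $\approx k\,2^{N+1}$ active nodes by the per-exchange cost $2\log_2 Q$ gives $C_{\mathrm{Tree}} \approx k\,2^{N+2}\log_2 Q$, the precise constant being absorbed in the approximation sign.

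The main obstacle is making this last step rigorous: one has to formalize precisely what the set ``leftmost alive branch together with all not-yet-confirmed-dead branches'' looks like at a given round, and show that it is contained in a depth-$N$ window of the tree whose width is $\Theta(2^{N})$ per level, so that outside this window the agents have indeed stopped all challenges and no node beyond the buffer ever contributes to the cost. This in turn requires a careful model of how the death of a node propagates to the three agents, relating $N$ to the spatial configuration and to the dead-time $m\tau$ of the loss model of Proposition~\ref{prob-ok}. Since both formulas are stated with an approximation sign, I would not attempt to pin down the exact constant, but only establish that the number of active nodes per depth is $\Theta(2^{N})$, which already yields the claimed scaling up to a constant factor.
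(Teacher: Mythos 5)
Your proposal is correct and follows essentially the same argument as the paper: count one $2\log_2 Q$-bit exchange per active node, get $k$ exchanges for the $\F_Q$ protocol, and for the Tree protocol use the leftmost-alive-branch pruning together with the $N$-round information delay to bound the active nodes per round by the $\approx 2^{N+1}$ descendants of the current leftmost alive node, yielding $k\,2^{N+1}\cdot 2\log_2 Q$ bits in total. The rigor concerns you raise at the end are real but are equally present in the paper's own proof, which likewise leaves the buffer-window analysis at the heuristic level signalled by the approximation sign.
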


In practice, the value of $N$ will be a small constant, which shows that the communication cost of the Tree protocol compares favorably with that of the original $k$-round $\F_Q$ protocol.

\begin{proof}
Obtaining the communication cost of the $\F_Q$ protocol is straightforward: there are $k$ rounds that each cost $2\log_2 Q$ bits. 

For the Tree protocol, we consider the ``worst case scenario'' where Alice's agents always respond in time. This means that all branches are alive unless Bob's agents decide not to send them challenges anymore. Since only the leftmost alive branch matters in the reveal phase, and since the prefix of the leftmost alive node never changes during the protocol, it is easy to see that Bob's agents do not need to continue sending challenges to branches that they know not to be the leftmost alive branch. 
In general, it may take $N$ additional rounds before all agents learn the status of all the history up to a given round. This means that in the worst case, Bob's agents should send challenges to all the descendants of the current leftmost alive node for $N$ rounds. The number of such nodes is upper bounded by $2^{N+1}$. 
Since there are $k$ rounds in total, the communication cost of the Tree protocol $\mathrm{Tree}$ can be upper bounded by $2^{N+1}k \times 2 \log_2 Q$ bits. 
\end{proof}

\section{Conclusion}

In this paper, we introduced a new relativistic bit commitment protocol that addresses one of the main weaknesses of the $\FQ$ protocol, namely its fragility against network failures. Indeed, the $\FQ$ protocol aborts as soon as one agent fails to respond to a single challenge in time. 
We fix this issue by modifying the $\FQ$ protocol so that each party is now represented by 3 agents in 3 distinct locations. The communication cost of this variant is relatively modest, but the gain in terms of tolerance to loss is very good: one expects a quadratic gain for the number of rounds that the protocol can sustain, making it very promising for implementations in real telecom networks (instead of dedicated networks), which is crucial for a possible future deployment of this technology. 

We conclude with a couple of open problems that are left for future investigation. First, the tree structure that we rely on here does not seem to be optimal and simpler schemes with reduced communication complexity would be interesting. Second, our security analysis is restricted to classical adversaries, as was already the case in \cite{CCL15, FF15} and the obvious next step is to see whether one can also prove security against quantum adversaries. The main difficulty to extend the analysis to the quantum case is that the composition of the rounds is more complicated to handle because the history is not described by classical random variables anymore, but rather by quantum states.

\section*{Acknowledgements}

We are grateful to Fr\'ed\'eric Grosshans for stimulating discussions on relativistic cryptography.

%\bibliography{paper.bib}

\begin{thebibliography}{1}

%{BOGKW88}

\bibitem[BCF{\etalchar{+}}14]{BCF14}
Harry Buhrman, Nishanth Chandran, Serge Fehr, Ran Gelles, Vipul Goyal, Rafail
  Ostrovsky, and Christian Schaffner.
\newblock Position-based quantum cryptography: Impossibility and constructions.
\newblock {\em SIAM Journal on Computing}, 43(1):150--178, 2014.

\bibitem[BOGKW88]{BGK88}
Michael Ben-Or, Shafi Goldwasser, Joe Kilian, and Avi Wigderson.
\newblock Multi-prover interactive proofs: How to remove intractability
  assumptions.
\newblock In {\em Proceedings of the twentieth annual ACM symposium on Theory
  of computing}, pages 113--131. ACM, 1988.

\bibitem[CCL15]{CCL15}
Kaushik Chakraborty, Andr\'e Chailloux, and Anthony Leverrier.
\newblock Arbitrarily long relativistic bit commitment.
\newblock {\em Phys. Rev. Lett.}, 115:250501, Dec 2015.

\bibitem[CGMO09]{CGM09}
Nishanth Chandran, Vipul Goyal, Ryan Moriarty, and Rafail Ostrovsky.
\newblock Position based cryptography.
\newblock In {\em Advances in Cryptology-CRYPTO 2009}, pages 391--407.
  Springer, 2009.

\bibitem[CSST11]{CSST11}
Claude Cr{\'e}peau, Louis Salvail, Jean-Raymond Simard, and Alain Tapp.
\newblock Two provers in isolation.
\newblock In {\em Advances in Cryptology--ASIACRYPT 2011}, pages 407--430.
  Springer, 2011.

\bibitem[FF16]{FF15}
Serge Fehr and Max Fillinger.
\newblock On the composition of two-prover commitments, and applications to
  multi-round relativistic commitments.
\newblock In Marc Fischlin and Jean{-}S{\'{e}}bastien Coron, editors, {\em
  Advances in Cryptology - {EUROCRYPT} 2016 - 35th Annual International
  Conference on the Theory and Applications of Cryptographic Techniques,
  Vienna, Austria, May 8-12, 2016, Proceedings, Part {II}}, volume 9666 of {\em
  Lecture Notes in Computer Science}, pages 477--496. Springer, 2016.

\bibitem[Gas04]{GASA04}
William Gasarch.
\newblock A survey on private information retrieval.
\newblock In {\em Bulletin of the EATCS}. Citeseer, 2004.

\bibitem[GIKM98]{GIKM98}
Yael Gertner, Yuval Ishai, Eyal Kushilevitz, and Tal Malkin.
\newblock Protecting data privacy in private information retrieval schemes.
\newblock In {\em Proceedings of the Thirtieth Annual ACM Symposium on Theory
  of Computing}, STOC '98, pages 151--160, New York, NY, USA, 1998. ACM.

\bibitem[KdW04]{KW04}
Iordanis Kerenidis and Ronald de~Wolf.
\newblock Quantum symmetrically-private information retrieval.
\newblock {\em Inf. Process. Lett.}, 90(3):109--114, May 2004.

\bibitem[Ken99]{Kent99}
Adrian Kent.
\newblock Unconditionally secure bit commitment.
\newblock {\em Phys. Rev. Lett.}, 83:1447--1450, Aug 1999.

\bibitem[Ken05]{Kent05}
Adrian Kent.
\newblock Secure classical bit commitment using fixed capacity communication
  channels.
\newblock {\em Journal of Cryptology}, 18(4):313--335, 2005.

\bibitem[Ken11]{Kent11}
Adrian Kent.
\newblock Unconditionally secure bit commitment with flying qudits.
\newblock {\em New Journal of Physics}, 13(11):113015, 2011.

\bibitem[Ken12]{Kent12}
Adrian Kent.
\newblock Unconditionally secure bit commitment by transmitting measurement
  outcomes.
\newblock {\em Phys. Rev. Lett.}, 109:130501, Sep 2012.

\bibitem[KMS11]{KMS10}
Adrian Kent, William~J. Munro, and Timothy~P. Spiller.
\newblock Quantum tagging: Authenticating location via quantum information and
  relativistic signaling constraints.
\newblock {\em Phys. Rev. A}, 84:012326, Jul 2011.

\bibitem[KTHW13]{KTH13}
Jed Kaniewski, Marco Tomamichel, Esther Hanggi, and Stephanie Wehner.
\newblock Secure bit commitment from relativistic constraints.
\newblock {\em Information Theory, IEEE Transactions on}, 59(7):4687--4699,
  2013.

\bibitem[LC97]{LC97}
Hoi-Kwong Lo and H.~F. Chau.
\newblock Is quantum bit commitment really possible?
\newblock {\em Phys. Rev. Lett.}, 78(17):3410--3413, Apr 1997.

\bibitem[LKB{\etalchar{+}}13]{LKB13}
T.~Lunghi, J.~Kaniewski, F.~Bussi\`eres, R.~Houlmann, M.~Tomamichel, A.~Kent,
  N.~Gisin, S.~Wehner, and H.~Zbinden.
\newblock Experimental bit commitment based on quantum communication and
  special relativity.
\newblock {\em Phys. Rev. Lett.}, 111:180504, Nov 2013.

\bibitem[LKB{\etalchar{+}}15]{LKB+15}
T.~Lunghi, J.~Kaniewski, F.~Bussi\`eres, R.~Houlmann, M.~Tomamichel, S.~Wehner,
  and H.~Zbinden.
\newblock Practical relativistic bit commitment.
\newblock {\em Phys. Rev. Lett.}, 115:030502, Jul 2015.

\bibitem[LL11]{LL11}
Hoi-Kwan Lau and Hoi-Kwong Lo.
\newblock Insecurity of position-based quantum-cryptography protocols against
  entanglement attacks.
\newblock {\em Physical Review A}, 83(1):012322, 2011.

\bibitem[May97]{May97}
Dominic Mayers.
\newblock Unconditionally secure quantum bit commitment is impossible.
\newblock {\em Phys. Rev. Lett.}, 78(17):3414--3417, Apr 1997.

\bibitem[NP00]{NP00}
Moni Naor and Benny Pinkas.
\newblock Distributed oblivious transfer.
\newblock In {\em In Proc. ASIACRYPT 2000}, pages 205--219. Springer-Verlag,
  2000.

\bibitem[Sim07]{sim07}
Jean-Raymond Simard.
\newblock Classical and quantum strategies for bit commitment schemes in the
  two-prover model.
\newblock Master's thesis, McGill University, 2007.

\bibitem[Unr14]{Unr14}
Dominique Unruh.
\newblock Quantum position verification in the random oracle model.
\newblock In {\em Advances in Cryptology--CRYPTO 2014}, pages 1--18. Springer
  Berlin Heidelberg, 2014.

\bibitem[VMH{\etalchar{+}}16]{VMH+16}
Ephanielle Verbanis, Anthony Martin, Rapha{\"e}l Houlmann, Gianluca Boso,
  F{\'e}lix Bussi{\`e}res, and Hugo Zbinden.
\newblock 24-hour relativistic bit commitment.
\newblock {\em arXiv preprint arXiv:1605.07442}, 2016.

\end{thebibliography}
\bibliographystyle{alpha}

\newcommand{\etalchar}[1]{$^{#1}$}

\newpage

\appendix

This appendix contains the proofs of the main technical claims as well as a short description of the generalization of the Tree protocol to an arbitrary number of agents per party.

\section{Proof of Sum inversions} \label{Apppendix:ProofOfSumInversions}

In this section, we prove Proposition \ref{lem:lemma1} which we recall below.

\begin{proposition} For $j \leq k-2$,
	$$
	IP_{j+1} = \E_{v \leftarrow Z_j} \E_{h_0 \in H_{j+1}^{-\{v,v\ell,vr\}}(v)}\E_{t \leftarrow T(v|h_0)} [IP_{vt}^{h_0}].
	$$
\end{proposition}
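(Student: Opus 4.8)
The plan is to prove Proposition \ref{lem:lemma1} as a pure reorganization of the nested expectations defining $IP_{j+1}$, which is exactly what the appendix title (``sum inversion'') suggests. I would start from $IP_{j+1} = \max_{\{g_w\}_{w \in V_{=j+1}}} \E_{w \leftarrow Z_{j+1}} \E_{h \leftarrow H_{j+1}^{-\{w\}}(w)} \E_{b_w \leftarrow B_{j+1}^h(w)}[g_w(d,h) == \alpha_w]$ and first push the maximization inside the expectation over $Z_{j+1}$. This step is legitimate because $IP_{j+1}(\{g_w\})$ is a convex combination (over $w \leftarrow Z_{j+1}$) of terms, each depending on a single predictor $g_w$; since the law of $Z_{j+1}$ does not depend on the predictors and distinct $g_w$ occur in disjoint terms, maximizing the whole sum equals maximizing each term separately, so that $IP_{j+1} = \E_{w \leftarrow Z_{j+1}}\big[\max_{g_w} \E_h \E_{b_w}[g_w(d,h) == \alpha_w]\big]$.

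Next I would reindex each depth-$(j+1)$ node as $w = vt$ with $v = w(\mathrm{parent}) \in V_{=j}$ and $t \in \{\ell, r\}$, and factor the sampling of $Z_{j+1}$. The key structural fact, established just before Definition \ref{defz} (stability of the prefix of the leftmost alive branch, i.e. $H_j^S(v) \subseteq H_j^S(w)$ for $w$ the parent of $v$), is that $w = vt$ is the leftmost alive node at depth $j+1$ if and only if $v$ is the leftmost alive node at depth $j$ \emph{and} $vt$ is the leftmost alive child of $v$. The first event is precisely $Z_j = v$; conditioned on it and on a partial history $h_0 \in H_{j+1}^{-\{v,v\ell,vr\}}(v)$, the second event is $T(v|h_0) = t$, with $\Pr[T(v|h_0)=t] = |B_{j+1}^{h_0}(vt)|/Q$ independent of the parent challenge $b_v$ (using that the sets $B_{j+1}^{h_0}(vt)$ do not depend on $b_v$). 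This rewrites $\E_{w \leftarrow Z_{j+1}}$ as $\E_{v \leftarrow Z_j}\E_{h_0}\E_{t \leftarrow T(v|h_0)}$, while I simultaneously split the inner history $h \in H_{j+1}^{-\{vt\}}(vt)$ (all nodes up to depth $j+1$ except $vt$) into the partial history $h_0$ together with the parent challenge $b_v$ and the brother challenge $b_{v\bar t}$.

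With this decomposition in hand, the remaining task is to match the inner quantity $\max_{g_{vt}}\E_h \E_{b_{vt}}[g_{vt} == \alpha_{vt}]$ with $IP_{vt}^{h_0}$ of Equation \ref{Equation:IP}. The average over $b_v$, which lives in $\E_h$ on the left, becomes the explicit $\E_{b_v \leftarrow B_j^{h_0}(v)}$ inside $IP_{vt}^{h_0}$, while the average over $b_{vt}$ matches $\E_{b_{vt} \leftarrow B_{j+1}^{h_0,b_v}(vt)}$. Since $\alpha_{vt}(h_0,b_v,b_{vt}) = y_{vt}(h_0,b_{vt}) + b_{vt}*\alpha_v(h_0,b_v)$, the dependence on $b_v$ enters only through $\alpha_v$, so the maximizing predictor may be taken to read $b_v$ (equivalently $\alpha_v(h_0,b_v)$), which is exactly the role of $g:\FQ \to \FQ$ in the definition of $IP_{vt}^{h_0}$. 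Collecting the three expectations then produces precisely $\E_{v \leftarrow Z_j}\E_{h_0 \leftarrow H_{j+1}^{-\{v,v\ell,vr\}}(v)}\E_{t \leftarrow T(v|h_0)}[IP_{vt}^{h_0}]$.

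The step I expect to be most delicate is the measure bookkeeping in the factorization: each nested expectation carries its own normalization ($|H_{j+1}(w)|/|H_{j+1}|$ for $Z_{j+1}$, the uniform weight $1/|H_{j+1}^{-\{w\}}(w)|$ over histories, and $1/|B_{j+1}^h(w)|$ over live challenges), and one must check that after regrouping by $(v,h_0,t)$ these collapse exactly to the uniform weight on $h_0 \in H_{j+1}^{-\{v,v\ell,vr\}}(v)$ together with the branching weight $|B_{j+1}^{h_0}(vt)|/Q$ that resurfaces in Proposition \ref{prop:rec}. A second subtlety --- the one flagged as an ``abuse of notation'' after Lemma \ref{lemma:ipvth} --- is that $IP_{vt}^{h_0}$ maximizes its predictor per fixed $h_0$, so one must reconcile this pointwise maximization with the single outer maximization over the family $\{g_w\}$ in the definition of $IP_{j+1}$; this is exactly the passage to the form $IP_j = \E_{v \leftarrow Z_j}\E_{h_0}[IP_v^{h_0}]$ recorded after Lemma \ref{lemma:ipvth}, and it is where one uses that $g_w$ depends only on $H_{j+1}^{Acc(w)}(w)$ and is shared across all completions of a given $h_0$.
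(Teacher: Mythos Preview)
Your proposal is correct and follows essentially the same route as the paper's proof in Appendix~\ref{Apppendix:ProofOfSumInversions}: reindex depth-$(j{+}1)$ nodes via their parent $v \leftarrow Z_j$ and the branching variable $T(v|h_0)$, decompose the history as $h_1 = (h_0, b_v)$, use that $T(v|h_0,b_v) = T(v|h_0)$ is independent of the parent challenge, and push the maximization inside to land on $IP_{vt}^{h_0}$. The only cosmetic difference is the order of operations---the paper first reindexes with the max still outside (its Lemma~\ref{Proposition:FirstExpectancies}) and only then moves the max inward, whereas you push the max inside $\E_{Z_{j+1}}$ first and reindex afterward.
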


\begin{proof}
	Fix an integer $j$, a node $v \in V_{=j}$ and a history $h_1 \in H_{j+1}^{-\{v\ell,vr\}}(v)$. Let us define $T(v|h_1)$, the random variable equal to `$\ell$' with probability $\frac{|B_{j+1}^{h_1}(v\ell)|}{Q}$ and `$r$' with probability $1 - \frac{|B_{j+1}^{h_1}(v\ell)|}{Q}$. If $h_1$ is consistent with $v$, then $vt$ with $t =T(v|h_1)$ is the left-most alive node at depth $j+1$. Let us also define
	$$ C_t^{h_1}(v\ell) = \left\{\begin{array}{ll}
	B_{j+1}^{h_1}(v\ell) & \textrm{if } t = \ell \\
	\F_Q - B_{j+1}^{h_1}(v\ell) & \textrm{if } t = r
	\end{array}
	\right. $$
to be the set of possible values of $b_{v\ell}$ conditioned on the node $v\ell$ being responsive ($C_\ell$) or not ($C_r$).
	
By averaging over histories $h_1$ consistent with the node $v$, we define the random variable $T(v)$ equal to `$\ell$' with probability $\frac{|H_{j+1}(v\ell)|}{|H_{j+1}(v)|}$ and to `$r$' with probability $\frac{|H_{j+1}(vr)|}{|H_{j+1}(v)|} = 1 - \frac{|H_{j+1}(v\ell)|}{|H_{j+1}(v)|}$:
\begin{align}
T(v) := \E_{h_1 \leftarrow H_{j+1}^{- \{v\ell,vr\}}(v)}[T(v|h_1)].
\end{align}

	\begin{lemma}\label{Proposition:FirstExpectancies}
		\begin{align*}
			&IP_{j+1}(\{g_{v'}\}_{v' \in V_{=j+1}}) \\
			&=  \E_{v \leftarrow Z_{j}}  \E_{h_1 \leftarrow H_{j+1}^{- \{v\ell,vr\}}(v)} \E_{t \leftarrow T(v|h_1)}  \E_{b_{v\ell} \leftarrow C_{t}^{h_1}(v\ell)}
			\E_{b_{vr} \leftarrow \F_Q}[g_{vt}(d,h_1) == \alpha_{vt}(d,h_1,b_{vt})]
		\end{align*}

	\end{lemma}
	
	\begin{proof}
		
		According to the definition of $IP_{j+1}$ we have,	
		\begin{align*} 
 IP_{j+1}(\{g_{v'}\}_{v' \in V_{=j+1}}) 	& = \E_{v' \leftarrow Z_{j+1}} \E_{h \leftarrow H_{j+1}^{- \{v'\}}(v')} \E_{b_{v'} \leftarrow B_{j+1}^h(v')}[g_{v'}(d,h) == \alpha_{v'}(d,h,b_{v'})] \\ % label{eqn:fromdef}\\
	& =  \E_{v \leftarrow Z_{j}} \E_{t \leftarrow T(v)} \E_{h \leftarrow H_{j+1}^{- \{vt\}}(vt)} \E_{b_{vt} \leftarrow B_{j+1}^h(vt)}[g_{vt}(d,h) == \alpha_{vt}(d,h,b_{vt})] 
	\end{align*}
The statement of the lemma follows from the fact that  $a_{vt}$ does not depend on $b_{v\overline{t}}$.	
	\end{proof}
	
	\begin{lemma}
		\begin{eqnarray*}
			IP_{j+1} & = & \E_{v \leftarrow Z_j} \E_{h_0 \in H_{j+1}^{-\{v,v\ell,vr\}}(v)}\E_{t \leftarrow T(v|h_0)} \\
			& & \max_{g_{vt}} \E_{b_v \in B_j^{h_0}(v)}  \E_{b_{vt} \in B_{j+1}^{(h_0,b_v)}(vt)} [g_{vt}(d,h_0,b_v) == \alpha_{vt}(d,h_0,b_v,b_{vt})].
		\end{eqnarray*}
	\end{lemma}
	
	\begin{proof}
		From Lemma \ref{Proposition:FirstExpectancies}, we have
		\begin{equation}
		\label{eqipjg}
		IP_{j+1}(\{g_{v'}\}) = \E_{v \leftarrow Z_{j}}  \E_{h_1 \leftarrow H_{j+1}^{- \{v\ell,vr\}}(v)} \E_{t \leftarrow T(v|h_1)}  \E_{b_{v\ell} \leftarrow C^{h_1}_{t}(v\ell)}\E_{b_{vr} \leftarrow F_Q}[g_{vt}(d,h_1,b_{v\overline{t}}) == \alpha_{vt}(d,h_1,b_{vt})] 
		\end{equation}
		From the definition of $IP_j$ we have,
		\begin{equation}
		IP_{j+1} = \max_{g_{vt} \in V_{=j+1}} IP_{j+1}(\{g_{v'}\})
		\end{equation}
		Since $a_{vt}(d,h_1,b_{vt})$ doesn't depend on $b_{v\overline{t}}$, the value of $IP_{j+1}$ remains unchanged if $g_{vt}$ depends only on $h_1$.
		This implies that we can write $IP_{j+1}$ as follows,
\begin{align*}
IP_{j+1}  &=  \max_{g_{vt} } \E_{v \leftarrow Z_{j}}  \E_{h_1 \leftarrow H_{j+1}^{- \{v\ell,vr\}}(v)} \E_{t \leftarrow T(v|h_1)}  \E_{b_{vt} \leftarrow B^{h_1}_{j+1}(vt)}[g_{vt}(d,h_1) == \alpha_{vt}(d,h_1,b_{vt})]\\
			 &=  \max_{g_{vt}} \E_{v \leftarrow Z_{j}}  \E_{(h_0,b_v) \leftarrow (H_{j+1}^{- \{v,v\ell,vr\}}(v)\times B^{h_0}_j)} \E_{t \leftarrow T(v|h_0,b_v)}  \E_{b_{vt} \leftarrow B^{h_0,b_v}_{j+1}(vt)}[g_{vt}(d,h_1) == \alpha_{vt}(d,h_1,b_{vt})]
\end{align*}
where $ h_1 = (h_0,b_v)$
\begin{align*}
IP_{j+1} &=  \max_{g_{vt}} \E_{v \leftarrow Z_{j}}  \E_{h_0 \leftarrow H_{j+1}^{- \{v,v\ell,vr\}}(v)} \E_{b_v \in B^{h_0}_{j}}\E_{t \leftarrow T(v|h_0,b_v)}  \E_{b_{vt} \leftarrow B^{h_0,b_v}_{j+1}(vt)}[g_{vt}(d,h_1) == \alpha_{vt}(d,h_1,b_{vt})]\\
			& =  \max_{g_{vt}} \E_{v \leftarrow Z_{j}}  \E_{h_0 \leftarrow H_{j+1}^{- \{v,v\ell,vr\}}(v)} \E_{b_v \in B^{h_0}_{j}}\E_{t \leftarrow T(v|h_0)}  \E_{b_{vt} \leftarrow B^{h_0,b_v}_{j+1}(vt)}[g_{vt}(d,h_0,b_v) == \alpha_{vt}(d,h_0,b_v,b_{vt})]
			\end{align*}
Notice that once we fix a leftmost alive node, the decision to go left or right is independent of $b_v$. Therefore, we have $T(v|h_0) = T(v|h_0,b_v)$, for any $b_v \in B^{h_0}_j(v)$.
\begin{align*}
IP_{j+1} &=   \E_{v \leftarrow Z_{j}}  \E_{h_0 \leftarrow H_{j+1}^{- \{v,v\ell,vr\}}(v)} \E_{t \leftarrow T(v|h_0)} \max_{g_{vt}} \E_{b_v \in B^{h_0}_{j}}\E_{b_{vt} \leftarrow B^{h_0,b_v}_{j+1}(vt)}[g_{vt}(d,h_0,b_v) == \alpha_{vt}(d,h_0,b_v,b_{vt})].
\end{align*}
	\end{proof}

	For a fixed history $h_0 \in H_{j+1}^{-\{v,v\ell,vr\}}(v)$ and $d$, we define the quantity $IP^{h_0}_{vt}$ in following manner,
	\begin{equation}
	IP_{vt}^{h_0,d}:= \max_{g^{h_0}}\E_{b_v \leftarrow B_j^{h_0}(v)} \E_{b_{vt} \leftarrow B_{j+1}^{h_0,b_v}(vt)} [g(d,h_0,b_v) == \alpha_{vt}(d,h_0,b_v,b_{vt})].
	\end{equation}
	Substituting the expression of $IP_{vt}^{h_0,d}$ in the expression of $IP_{j+1}$ we get,
	\begin{equation}
	\label{ipj1}
	IP_{j+1} = \E_{v \leftarrow Z_j} \E_{h_0 \in H_{j+1}^{-\{v,v\ell,vr\}}(v)}\E_{t \leftarrow T(v|h_0)} [IP_{vt}^{h_0,d}].
	\end{equation}
\end{proof}
\section{Base case of the recursion: $j=0$}\label{Appendix:j=0}
We first consider the base case where $j=0$.

\begin{lemma}
	$$IP_{0} \le \frac{1}{2} + \sqrt{\frac{2}{Q}}.$$
\end{lemma}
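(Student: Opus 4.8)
The statement to prove is the base case $IP_0 \le \frac{1}{2} + \sqrt{2/Q}$, which bounds the independence parameter at the root of the tree. My plan is to unwind the definition of $IP_0$ and reduce it directly to a statement about Alice's cheating probability in the single-round commitment, then invoke the known $\CHSH_Q$ bound exactly as is done for the inductive step in Lemma~\ref{lemma:ipvth}.

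First I would observe that at depth $j=0$ there is a single node, the root $\varnothing$, so the expectation $\E_{v \leftarrow Z_0}$ is trivial and $IP_0 = \max_{g} \E_{b_\varnothing \leftarrow B_0(\varnothing)}[g(d) == \alpha_\varnothing(d, b_\varnothing)]$, where by Eq.~\eqref{alpha} we have $\alpha_\varnothing = y_\varnothing - b_\varnothing * d$. The crucial structural feature here is that $y_\varnothing = a_\varnothing + d * b_\varnothing$ is produced by Alice's root agent who receives the challenge $b_\varnothing$, so $\alpha_\varnothing$ genuinely depends on $b_\varnothing$, whereas the committed bit $d \in \{0,1\}$ is the only "free" variable the guessing function $g$ is allowed to use. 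The quantity $IP_0$ therefore measures how well a function of $d$ alone can predict $\alpha_\varnothing$, averaged over a uniformly random challenge $b_\varnothing$.

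Next I would set this up as a $\CHSH_Q$-type game, paralleling the proof of Lemma~\ref{lemma:ipvth}. The natural reduction treats $d$ as the variable that must be guessed without seeing the challenge: since $d \in \{0,1\}$ can take only two values, a guesser's success probability is trivially at most $\frac{1}{2}$ in the relevant sense, which plays the role of the bound $IP_v^{h_0}$ in the inductive step. Concretely, one constructs a strategy for a game of the form $\CHSH_Q^{B_0(\varnothing)}(1/2)$, using Alice's answer function $y_\varnothing$ on one side and the optimizing function $g$ on the other, and then applies the upper bound from Lemma~\ref{lemma} to obtain $IP_0 \le \frac{1}{2} + \sqrt{2/|B_0(\varnothing)|}$. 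Since an honest-reveal-forcing strategy uses all of $\F_Q$ here (the root agent must answer to have any chance of a valid reveal), $|B_0(\varnothing)| = Q$, giving the claimed $\frac{1}{2} + \sqrt{2/Q}$.

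The main subtlety — and the only real obstacle — is verifying that the guessing probability against the binary variable $d$ is correctly bounded by $\frac{1}{2}$ in the $\CHSH_Q$ reduction, rather than by some other constant, and confirming that the substitution $|B_0(\varnothing)| = Q$ is legitimate given Alice's optimal strategy at the root. This requires care because $d$ ranges over $\{0,1\}$ rather than over $\F_Q$, so the distribution played to Bastian in the game is supported on essentially two values; one must check that the $p = 1/2$ parameter in $\CHSH_Q^S(p)$ captures this correctly. Since the excerpt states this lemma was already established in \cite{CCL15} and merely reproduced with the present notation, I expect the argument to go through cleanly once the binary nature of $d$ is handled, and I would defer the detailed algebraic manipulation of the game value (which mirrors the chain of equalities in Lemma~\ref{lemma:ipvth}) to the same template.
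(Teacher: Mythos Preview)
Your proposal is correct and follows essentially the same approach as the paper: unwind $IP_0$ at the single root node, recognize that $\alpha_\varnothing = y_\varnothing - b_\varnothing * d$ with $y_\varnothing$ a function of $b_\varnothing$ alone, reduce to an instance of $\CHSH_Q^S(p)$ with $p = 1/2$ (coming from the binary $d$) and $|S| = Q$ (the root must respond), and apply Lemma~\ref{lemma}. The only cosmetic omission is that your displayed expression for $IP_0$ drops the expectation over $d \in \{0,1\}$ (which the paper keeps explicitly, since $H_0^{-\{\varnothing\}}$ consists precisely of the committed bit); once that is restored, your sketch and the paper's proof coincide line by line.
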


\begin{proof}
	According to the definition of $IP_j$ we have,
	\begin{equation}
	IP_j = \max_{\{g_v\}_{v \in V_{=j}}} IP_j(\{g_v\}_{v \in V_{=j}}),
	\end{equation}
	
	where, 
	
	\begin{equation}
	IP_j(\{g_v\}_{v \in V_{=j}}) = \E_{v \leftarrow Z_j} \E_{h \leftarrow H_j^{- \{v\}}(v)} \E_{b_v \leftarrow B_j^h(v)}[g_v(d,h) == \alpha_j(d,h,b_v)].
	\end{equation}

	For $j = 0$, \textit{i.e.}, at the root of the tree, we have $V_{=j} = \{ v_0\}$, where $v_0 = \varnothing$, $H_0^{- \{v_0\}}(v_0)$ contains only the commitment $d$ and $B_j^h(v) = \F_Q$. So, we have 
	$IP_{0} = \max_{g_{v_0}}\E_{d \leftarrow \{0,1\}} \E_{b_{v_0} \leftarrow \F_Q}[g_{v_0}(d) == \alpha_{v_0}(d,b_{v_0})]$. 
	Here we give the upper bound on $IP_{0}$ by reducing it to an instance $G$ of the following nonlocal games between two players Adeline and Bastian, where 
	\begin{itemize}
		\item Adeline receives a random element $b_{v_0} \in \F_Q$. 
		Bastian receives a random element $d \in \{0,1\}$. 
		\item Their goal is to respectively output $A$ and $B$ in $\F_Q$ such that $A + B  = b_{v_0} * d$.
	\end{itemize}
	
	Without any loss of generality we can consider Adeline and Bastian's strategy to be deterministic, namely Adeline's strategy is a deterministic function $y_{v_0}(b_{v_0})$ and Bastian's strategy is a deterministic function $-g_{v_0}(d)$. This strategy gives a lower bound on the value $\omega(G)$ of the game:
	\begin{align*}
	\omega(G) & \geq  \max_{g_{v_0}} \Pr_{b_{v_0},d} [y_{v_0}(b_{v_0}) - g_{v_0}(d) = b_{v_0} * d]\\
	& =  \max_{g_{v_0}} \Pr_{b_{v_0},d} [\alpha_{v_0}(d,b_{v_0}) + d * b_{v_0}) - g_{v_0}(d) = (b_{v_0} * d)] \\
	&  \mbox{(substituting $y_{v_0} = \alpha_{v_0} + b_{v_0}*d$)} \\
	& =  \max_{g_{v_0}} \Pr_{b_{v_0},d} [g_{v_0}(d) == \alpha_{v_0}(d,b_{v_0})]\\
	& =  IP_{0}.
	\end{align*}
	We can conclude using the result of Lemma \ref{lemma} proven in the next section to the case where $p=1/2$ and $S = \{0,1\}$: we obtain
	\begin{equation}
	IP_{0} \leq \frac{1}{2} + \sqrt{\frac{2}{Q}}.
	\end{equation}
\end{proof}

\section{A generalization of $\mathrm{CHSH}_Q(p)$ games with restricted inputs.}
\label{games}

The class of $\mathrm{CHSH}_Q(p)$ games was introduced in \cite{CCL15} in order to analyze the security of the $\FQ$ protocols. These are simply two-party nonlocal games between Adeline and Bastian who respectively receive inputs $x, y\in \F_Q$ and output $a, b \in F_Q$. Here $x$ is drawn from the uniform distribution while $y$ is drawn according to a probability distribution $\{p_y\}_{y\in \F_Q}$ such that $\max_y p_y \leq p$.
Adeline and Bastian win the game if $a + b = x *y$ in $\FQ$. 

Here, we define a slight variant of these games where the only difference is now that Adeline's inputs are drawn uniformly from a subset $S$ of $\F_Q$. We denote this class of games by  $\mathrm{CHSH}^S_Q(p)$. In particular, one has $\mathrm{CHSH}_Q(p) = \mathrm{CHSH}^{\F_Q}_Q(p)$.

It is straightforward to upper bound the classical value of games in $\mathrm{CHSH}^S_Q(p)$ using the same technique as in \cite{CCL15}. For completeness, we include this proof here.

\begin{lemma}
\label{lemma}
	For any game $G \in \mathrm{CHSH}_Q^S(p)$, we have 
	\begin{align}
	\omega(G) \le p + \sqrt{\frac{2}{|S|}}.
	\end{align}
\end{lemma}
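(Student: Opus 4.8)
The plan is to bound the classical value of any game $G \in \mathrm{CHSH}_Q^S(p)$ by reducing it to a counting argument over the choice of Adeline's input. Since we may restrict to deterministic strategies without loss of generality (any randomized strategy is a convex combination of deterministic ones, so the optimal value is attained on a deterministic strategy), we fix deterministic functions $a = f(x)$ for Adeline and $b = g(y)$ for Bastian. The winning condition $a + b = x * y$ then becomes $f(x) + g(y) = x * y$ in $\F_Q$, and the value of the game is
\begin{align*}
\omega(G) = \max_{f,g} \E_{x \leftarrow S} \E_{y \sim \{p_y\}} [f(x) + g(y) == x * y].
\end{align*}

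The key step is to fix Bastian's strategy $g$ and analyze, for each fixed value of $y$, how many inputs $x \in S$ can simultaneously satisfy the winning condition. First I would rewrite the target as: for a given $y$, Adeline wins exactly when $f(x) = x * y - g(y)$. The crucial observation is that for two distinct values $y, y'$, the two ``good'' functions of $x$, namely $x \mapsto x*y - g(y)$ and $x \mapsto x*y' - g(y')$, are distinct affine (in fact linear-plus-constant) functions of $x$ whose difference $x*(y-y') - (g(y)-g(y'))$ vanishes for at most one value of $x$ (since $y \neq y'$ and $\F_Q$ is a field, $x*(y-y')$ is a bijection in $x$). Thus any fixed value $f(x)$ can be ``correct'' for at most one $y$ — meaning each $x \in S$ contributes to the success of at most one value of $y$. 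This lets me bound the number of pairs $(x,y)$ with $x \in S$ on which Adeline and Bastian win.

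From here I would set up the averaging carefully. Writing $\omega(G) = \sum_{y} p_y \cdot \frac{|\{x \in S : f(x) = x*y - g(y)\}|}{|S|}$ and letting $N_y := |\{x \in S : f(x) = x*y - g(y)\}|$, the previous paragraph shows $\sum_y N_y \le |S|$ (each $x$ is counted at most once across all $y$). I then need to maximize $\sum_y p_y N_y / |S|$ subject to $\sum_y N_y \le |S|$ and $\max_y p_y \le p$. This is a linear optimization that would be solved by putting as much mass as possible on the $y$ with largest $N_y$; the constraint $p_y \le p$ caps the reward per unit of $N_y$. A clean way to package the resulting bound is to split into the contribution from the ``large'' $N_y$ and the tail, which is exactly where the $\sqrt{2/|S|}$ term should emerge: heuristically, if some $N_y$ is large then few $x$ remain for other $y$, while the constraint $p_y \le p$ limits the gain.

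The main obstacle I anticipate is getting the precise constant $\sqrt{2/|S|}$ rather than a looser bound. The combinatorial skeleton ($\sum_y N_y \le |S|$, $p_y \le p$) only immediately gives something like $\omega(G) \le p \cdot (\text{number of large } N_y) + \ldots$, and converting this into the additive form $p + \sqrt{2/|S|}$ requires a sharp balancing argument — likely an application of Cauchy–Schwarz or an explicit optimization of $\sum_y p_y N_y$ against $\sum_y N_y^2$ to control how concentrated the $N_y$ can be. Since the lemma is asserted to follow by ``the same technique as in \cite{CCL15},'' I would import the corresponding quantitative estimate from that reference, verifying only that the restriction of Adeline's input from $\F_Q$ to the subset $S$ changes the normalization from $Q$ to $|S|$ (which is exactly why $\sqrt{2/Q}$ becomes $\sqrt{2/|S|}$) without affecting the counting structure, since the pairing argument over $y$ is independent of how large the domain $S$ of $x$ is.
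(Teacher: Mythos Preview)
Your argument has a genuine gap at the key combinatorial step. You claim that ``each $x \in S$ contributes to the success of at most one value of $y$,'' i.e.\ $\sum_y N_y \le |S|$, but this does not follow from your observation. What you correctly showed is that for each \emph{pair} $y \neq y'$ there is at most one $x$ with $f(x) = x\!*\!y - g(y)$ and $f(x) = x\!*\!y' - g(y')$ simultaneously. That is a statement about coincidences in $x$, not an injectivity statement for the map $y \mapsto x\!*\!y - g(y)$ at fixed $x$. In fact $\sum_y N_y \le |S|$ is false in general: take $f \equiv 0$ and $g \equiv 0$, so the winning condition is $x\!*\!y = 0$. Then $N_0 = |S|$, while for every $y \neq 0$ we have $N_y \ge 1$ whenever $0 \in S$, giving $\sum_y N_y \ge |S| + (Q-1)$. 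Your subsequent linear optimization therefore does not bound $\omega(G)$, and indeed if $\sum_y N_y \le |S|$ held you would get $\omega(G) \le p$, which is too strong.

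The paper's proof uses your pairwise observation, but organizes it differently. It works pointwise in $x$: setting $r_x^y = \mathbbm{1}[f(x)+g(y)=x\!*\!y]$ and $\omega^x = \sum_y p_y r_x^y$, it interprets $\sum_{y \neq y'} p_y p_{y'} r_x^y r_x^{y'}$ as (a lower bound on) the success probability $S_x$ of a no-signaling strategy by which Bastian guesses $x$ from a random pair $(y,y')$ via $\hat{x} = (g(y)-g(y'))\!*\!(y-y')^{-1}$. No-signaling forces $\E_x S_x \le 1/|S|$, and expanding the square gives $(\omega^x)^2 \le p\,\omega^x + 2S_x$, hence $\omega^x \le p + \sqrt{2S_x}$; averaging over $x$ and using concavity of the square root yields $\omega(G) \le p + \sqrt{2/|S|}$. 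The $\sqrt{2/|S|}$ term is thus not a sharpening of a weak combinatorial bound but the outcome of this quadratic/no-signaling argument; your proposed linear optimization cannot produce it.
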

\begin{proof}
Fix a game $G \in \mathrm{CHSH}_Q^{S}(p)$. As usual, the classical value of the game can always be achieved with a deterministic strategy, meaning that without loss of 
generality, Alice and Bob's strategies can be modeled by functions $f$ and $g$, namely: $a = f(x)$ and $b = g(y)$. 
Define the variable $r_x^{y}$ equal to $1$ if $f(x) + g(y) = x *y$ and $0$ otherwise.

Consider the following strategy for Bob: pick a random pair of distinct inputs $y, y'$ according to the distribution $\{p_y\}_{y \in \F_q}$, \textit{i.e.}~ 
with probability $p_y p_y'/P$ where $P = \sum_{y \ne y'} p_y p_y'$, and output the guess $\hat{x}$ for $x$ defined by $\hat{x} = (g(y)-g(y'))*(y-y')^{-1}$.
Let $S_x$ be the probability of correctly guessing the value $x$ with this strategy. Non signaling imposes that $\mathbbm{E}_x[S_x] = 1/|S|$, since the value $x$ is uniformly distributed in $S$.

On the other hand, we note that if the game $G$ is won for both inputs $(x,y)$ and $(x,y')$, then Bob's strategy outputs the correct value for $x$. Indeed, winning 
the game for both inputs means that $f(x) + g(y) = x*y$ and $f(x) + g(y') = x*y'$ which implies 
that $g(y)-g(y') = (y-y')*x$ and therefore $\hat{x} = x$.
One immediately obtains a lower bound on $S_x$:
\begin{align}
\label{eqn:bound}
S_x \geq \frac{1}{P} \sum_{y, y' \neq y} p_y r_x^y p_y' r_{x}^{y'} \geq  \sum_{y,y' \neq y} p_y r_x^y p_y' r_{x}^{y'}, 
\end{align}
where the second inequality follows from the fact that $P \leq 1$.
Consider the quantity $\omega^x = \sum_y p_y r_x^y$. It satisfies:
$$(\omega^x)^2 \leq \sum_y p_y^2 (r_x^y)^2 + 2 S_x = \sum_y (p_y)^2 r_x^y + 2 S_x \leq p \omega^x + 2S_x,$$
where the first inequality follows from the bound of Eq.~\ref{eqn:bound} and where we used that $(r_x^y)^2=r_x^y$ and $(p_y)^2 \leq \left(\max_{y} \{p_y\} \right) p_y \leq p p_y$. 
Solving this quadratic equation gives that
$$ \omega^x \leq \frac{1}{2}\left( p  +\sqrt{p^2 + 8S_x}\right)$$
and the concavity of the square-root function implies that
$$ \omega^x \leq  p + \sqrt{2S_x}.$$
Finally, $\omega(G) = \mathbbm{E}_x[\omega^x]$ by definition and using the concavity of the square-root function once more shows that:
$$\omega(G) \leq p + \sqrt{2}\mathbbm{E}_x [\sqrt{S_x}] \leq p + \sqrt{2} \sqrt{\mathbbm{E}_x[S_x]} \leq p +\sqrt{2/|S|},$$
which concludes the proof.
\end{proof}

\section{Generalization to $n$ agents per party}
\label{sec:n-agents}

It is straightforward to generalize the Tree protocol to the case where each party is represented by $n$ agents. In that case, the binary tree should be replaced by a complete $n$-ary tree, together with an $n$-coloring of that tree. 
For the protocol to abort, it requires that $n-1$ stations die simultaneously. It is straightforward to see that the probability that the protocol succeeds becomes $(1-q(n))^k$ with
\begin{align}
q(n) = n (mp)^{n-1} + (mp)^n. 
\end{align}
Provided that $nmp \ll1$, the half-life of the generalized Tree protocol $\mathrm{Tree}(n)$ with $n$ agents per player becomes:
\begin{align}
t_{\mathrm{Tree}(n)}(p,m)  \approx \frac{1}{n (mp)^{n-1}}.
\end{align}

It is less straightforward to generalize the security proof to the case of $n$ agents. However, it is natural to conjecture that an analysis similar to that of Proposition \ref{prop:rec} for the Tree protocol with 3 locations will work. 
\begin{conj}
The $k$-round Tree protocol with $n \geq 3$ agents per party is $\varepsilon_{k,n}$-binding with 
\begin{align}
\varepsilon_{k,n} = 2 k x_n \sqrt{\frac{2}{Q}}
\end{align}
with
\begin{align}
x_2 = 1, \quad x_n = x_{n-1} + \frac{1}{4 x_{n-1}}.
\end{align}
In particular, asymptotically, it holds that $x_n \sim \sqrt{n/2}$. 
\end{conj}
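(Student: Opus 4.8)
The plan is to mirror the three-agent analysis of Section~\ref{sec:security}, replacing the complete binary tree by a complete $(n-1)$-ary tree equipped with an $n$-coloring $c$ satisfying, for every internal node $v$ with children $v_1,\dots,v_{n-1}$, the property $\{c(v),c(v_1),\dots,c(v_{n-1})\}=\{1,\dots,n\}$ (so that a node and all of its children sit at pairwise distinct locations, which is exactly what the relativistic constraints require). All the definitions of Subsection~\ref{Section:Histories}---the histories $H_j^S$, consistency, the sets $B_j^h(v)$, the variable $Z_j$, and the independence parameters $IP_j$---extend verbatim, the only change being that the selector $T(v\mid h_0)$ now ranges over the $n-1$ children of $v$ rather than over $\{\ell,r\}$. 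In particular, Proposition~\ref{prop:main} (which only concerns the leaves) and Lemma~\ref{lemma:ipvnot} (which only concerns the root, where $B=\F_Q$) carry over unchanged, giving $1+\varepsilon_{k,n}\le 2\,IP_{k-1}$ and $IP_0\le\tfrac12+\sqrt{2/Q}$.

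The heart of the argument is again the per-child bound of Lemma~\ref{lemma:ipvth}, namely $IP_{vt}^{h_0}\le IP_v^{h_0}+\sqrt{2/|B_{j+1}^{h_0}(vt)|}$. Its proof reduces a \emph{single} child's independence parameter to an instance of $\CHSH_Q^{S}(IP_v^{h_0})$ and invokes Lemma~\ref{lemma}; nothing there depends on how many siblings $vt$ has, so the bound holds for each of the $n-1$ children without modification. I would then prove the $(n-1)$-ary analogue of the sum-inversion identity of Proposition~\ref{lem:lemma1},
$$
IP_{j+1}=\E_{v\leftarrow Z_j}\ \E_{h_0}\ \sum_{i=1}^{n-1}P_i(v,h_0)\,IP_{vc_i}^{h_0},
$$
where $P_i(v,h_0)=\Pr[T(v\mid h_0)=c_i]$ is the probability that the $i$-th child $c_i$ is the leftmost alive one. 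As in the binary case this follows from expanding the definition of $IP_{j+1}$ over the depth-$(j+1)$ nodes and reorganizing the nested expectations; the only new bookkeeping is that one must integrate over the responsiveness of the first $i-1$ children before reaching $c_i$.

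Writing $q_i:=|B_{j+1}^{h_0}(c_i)|/Q$ for the probability that $c_i$ responds, the optimal cheating strategy---whose structure I would justify exactly as in the three-agent discussion---keeps the rightmost child $c_{n-1}$ always responsive ($q_{n-1}=1$, guaranteeing the protocol does not abort) and lets each earlier child decide on its own challenge, so that $P_i=q_i\prod_{s<i}(1-q_s)$. Substituting the per-child bound and using $\sum_iP_i=1$ together with the identity $IP_j=\E_{v}\E_{h_0}[IP_v^{h_0}]$ gives
$$
IP_{j+1}\le IP_j+\sqrt{\tfrac{2}{Q}}\,\max_{q_1,\dots,q_{n-2}}\ \sum_{i=1}^{n-1}\sqrt{q_i}\prod_{s<i}(1-q_s),
$$
and I would denote this maximum by $x_n$. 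The maximization factorizes: pulling out the first factor, $x_n=\max_{q_1}\big[\sqrt{q_1}+(1-q_1)\,x_{n-1}\big]$, since the remaining terms all carry the common factor $(1-q_1)$ and their optimum is precisely the $(n-2)$-child problem $x_{n-1}$. Optimizing over $q_1$ gives $\sqrt{q_1}=1/(2x_{n-1})$, whence $x_n=x_{n-1}+\tfrac1{4x_{n-1}}$ with base case $x_2=1$ (the single-child, i.e.\ $\FQ$, protocol). Squaring the recursion yields $x_n^2=x_{n-1}^2+\tfrac12+O(x_{n-1}^{-2})$, so $x_n\sim\sqrt{n/2}$.

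Unrolling $IP_{j+1}\le IP_j+x_n\sqrt{2/Q}$ from $IP_0\le\tfrac12+\sqrt{2/Q}$ gives $IP_{k-1}\le\tfrac12+\sqrt{2/Q}+(k-1)x_n\sqrt{2/Q}$, and Proposition~\ref{prop:main} together with $x_n\ge1$ then yields $\varepsilon_{k,n}=2IP_{k-1}-1\le 2kx_n\sqrt{2/Q}$, as claimed. The main obstacle I anticipate is not the clean optimization above but the two structural inputs feeding into it: first, rigorously establishing that the optimal strategy has the product form $P_i=q_i\prod_{s<i}(1-q_s)$ with $q_{n-1}=1$ (the three-agent paper asserts the binary version of this as an ``easy to see'' fact, but for $n-1$ siblings one should check that no correlated or asymmetric choice of responsive sets, nor a different choice of the always-alive child, does better); and second, carrying out the $(n-1)$-ary sum-inversion of Proposition~\ref{lem:lemma1} carefully, since the interchange of the ``which child is alive'' expectation with the maximization over the functions $g_{vc_i}$ is exactly where the binary proof is most delicate.
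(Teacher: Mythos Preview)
The paper does not prove this statement: it is explicitly labeled a \emph{conjecture}, and the surrounding text says only that ``it is natural to conjecture that an analysis similar to that of Proposition~\ref{prop:rec} for the Tree protocol with 3 locations will work.'' There is no proof in the paper to compare against.

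Your proposal therefore goes strictly beyond what the paper offers. In particular, the paper never explains where the recursion $x_n=x_{n-1}+\tfrac{1}{4x_{n-1}}$ comes from; you have correctly reverse-engineered it from the optimization $\max_{q_1}\bigl[\sqrt{q_1}+(1-q_1)x_{n-1}\bigr]$, and your sanity check $x_3=5/4$ indeed reproduces the constant in Proposition~\ref{prop:rec} and the Corollary. You have also quietly corrected a slip in Appendix~\ref{sec:n-agents}: the paper says ``complete $n$-ary tree,'' but for $n$ agents with the coloring constraint $\{c(v),c(v_1),\dots\}=\{1,\dots,n\}$ the tree must be $(n-1)$-ary, exactly as you wrote, and consistent with the binary tree used for $n=3$.

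The two obstacles you flag at the end are real and are presumably why the authors left this as a conjecture. The product form $P_i=q_i\prod_{s<i}(1-q_s)$ is not automatic: while each child's responsiveness decision cannot depend on $b_v$ or on its siblings' challenges at the same round, it \emph{can} depend on other parts of the accessible history, and one must argue that correlating the $q_i$'s through that shared history cannot help. Likewise, the $(n-1)$-ary sum inversion requires tracking which of several siblings' response sets the expectation over $b_{vc_i}$ is restricted to, and the interchange of max and expectation that is handled line by line in Appendix~\ref{Apppendix:ProofOfSumInversions} would need to be redone. Your outline is the right shape for a proof, but these two points are where the actual work lies; until they are written out, the statement remains a conjecture.
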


\end{document}